\newcommand{\cw}{\ensuremath\textrm{cw}}
\newcommand{\tw}{\ensuremath\textrm{tw}}
\newcommand{\pw}{\ensuremath\textrm{pw}}
\newcommand{\ms}{\ensuremath\textrm{ms}}
\newcommand{\dOr}{$d$-\textsc{Orientable Deletion}}
\newcommand{\cdOr}{\textsc{Capacitated}-$d$-\textsc{Orientable Deletion}}
\tikzstyle{vertex}=[circle, draw, inner sep=1pt, minimum width=6pt]
\renewcommand\leq\leqslant
\renewcommand\geq\geqslant
\renewcommand\le\leqslant
\renewcommand\ge\geqslant
\title{Parameterized Orientable Deletion\thanks{This work was financially supported by the ``PHC Sakura'' program (project GRAPA, number:  38593YJ), implemented by the French Ministry of Foreign Affairs, the French Ministry of Higher Education and Research and the Japan Society for Promotion of Science.
Y.O.\ was partially supported by JSPS KAKENHI grant numbers JP18K11168, JP18K11169, JP18H04091. M.L. and F.S. are partially supported by the project “ESIGMA” (ANR-17-CE40-0028)}}
\author[1]{Tesshu Hanaka}
\author[2,$\star$]{Ioannis Katsikarelis}
\author[2]{Michael Lampis}
\author[3]{Yota Otachi}
\author[2]{Florian Sikora}
\affil[1]{Chuo University, Tokyo, Japan}
\affil[2]{Universit\'e Paris-Dauphine, PSL Research University, CNRS, UMR, LAMSADE, Paris, France}
\affil[3]{Kumamoto University, Kumamoto, Japan}
\affil[$\star$]{Corresponding: i.katsikarelis@gmail.com}
\theoremstyle{plain}
\newtheorem{theorem}{Theorem}
\newtheorem{lemma}[theorem]{Lemma}
\newtheorem{corollary}[theorem]{Corollary}
\newtheorem{definition}[theorem]{Definition}
\newlength{\defbaselineskip}
\newcommand{\setlinespacing}[2]%
           {\setlength{\baselineskip}{#1 \defbaselineskip}}
\newlength{\btw}
\newlength{\stw}
\newsavebox\tmpbox
\date{}
\begin{document}

\maketitle

\begin{abstract}
A graph is $d$-orientable if its edges can be oriented so that the maximum
in-degree of the resulting digraph is at most $d$. $d$-orientability is a
well-studied concept with close connections to fundamental graph-theoretic
notions and applications as a load balancing problem. In this paper we consider
the \dOr\ problem: given a graph $G=(V,E)$, delete the minimum number of
vertices to make $G$ $d$-orientable.  We contribute a number of results that
improve the state of the art on this problem.  Specifically: 
\begin{itemize}
\item We show that the problem is W[2]-hard and $\log n$-inapproximable with
respect to $k$, the number of deleted vertices. This closes the gap in the
problem's approximability.
\item We completely characterize the parameterized complexity of the problem on
chordal graphs: it is FPT parameterized by $d+k$, but W[1]-hard by $d$ and W[2]-hard by $k$ alone.
\item We show that, under the SETH, for all $d,\epsilon$, the problem does not
admit a $O^*((d+2-\epsilon)^{\tw})$-time algorithm where $\tw$ is the graph's treewidth,
resolving as a special case an open problem on the complexity of
\textsc{PseudoForest Deletion}.
\item We show that the problem is W[1]-hard parameterized by the input graph's
clique-width.  Complementing this, we provide an algorithm running in time
$O^*(d^{O(d\cdot \cw)})$, showing that the problem is FPT by $d+\cw$, and improving
the previously best known algorithm for this case.
\end{itemize}
\end{abstract}


\section{Introduction}\label{sec:intro}

In this paper we study the following natural optimization problem: we are given
a graph $G=(V,E)$ and an integer $d$, and are asked to give directions to the edges of $E$
so that in the resulting digraph as many vertices as possible have in-degree  at most
$d$. Equivalently, we are looking for an orientation of $E$ such that the
set of vertices $K$ whose in-degree is strictly more than $d$ is minimized.
Such an orientation is called a $d$-orientation of $G[V\setminus K]$, and we say
that $K$ is a set whose deletion makes the graph $d$-orientable.
The problem of orienting the edges of an undirected graph so that the in-degree
of all, or most, vertices stays below a given threshold has been extensively
studied in the literature, in part because of its numerous applications. In
particular, one way to view this problem is as a form of scheduling, or load
balancing, where edges represent jobs and vertices represent machines. In this
case the in-degree represents the load of a machine in a given assignment, and
minimizing it is a natural objective (see e.g.
\cite{BateniCG09,ChakrabartyCK09,EbenlendrKS14,VerschaeW14}). Finding an
orientation where all in- or out-degrees are small is also of interest for the
design of efficient data structures \cite{ChrobakE91}. For more applications we refer the reader to
\cite{AsahiroJMO16,AsahiroJMOZ11,AsahiroMO11,AsahiroMOZ07,Bodlaender2017} and
the references therein.

\subparagraph*{State of the art.} $d$-orientability has been well-studied in
the literature, both because of its practical motivations explained above, but
also because it is a basic graph property that generalizes and is closely
related to fundamental concepts such as $d$-degeneracy (as a graph is
$d$-degenerate if and only if it admits an \emph{acyclic} $d$-orientation), and
bounded degree. This places \dOr\ in a general context of graph editing
problems that measure the distance of a given graph from having one of these
properties \cite{BetzlerBNU12,Mathieson10}.

The problem of orienting the edges of an
unweighted graph such that the maximum out-degree of vertices is minimized is solvable in polynomial
time \cite{AsahiroMOZ07}, though the problem becomes APX-hard
\cite{EbenlendrKS14} and even W[1]-hard parameterized by treewidth
\cite{Szeider11} if one allows edge weights. In this paper we focus on
unweighted graphs, for which computing the minimum number of vertices that need
to be deleted to make a graph $d$-orientable is easily seen to be NP-hard, as
the case $d=0$ corresponds to \textsc{Vertex Cover}. This hardness has
motivated the study of both polynomial-time approximation and parameterized
algorithms, as well as algorithms for specific graph classes. For
approximation, if the objective function is to maximize the number of
non-deleted vertices, the problem is known to be
$n^{1-\epsilon}$-inapproximable; if one seeks to minimize the number of deleted
vertices, the problem admits an $O(\log d)$-approximation, but it is not known
if this can be improved to a constant \cite{AsahiroJMO16}.  From the
parameterized point of view, the problem is W[1]-hard for any fixed $d$ if the
parameter is the number of non-deleted vertices, but FPT parameterized by $d$ and $\tw$ \cite{Bodlaender2017}. To the
best of our knowledge, the complexity of this problem parameterized by the
number of deleted vertices is open.
Moreover, \cite{BetzlerBNU12} shows that the related problem of deleting as few vertices as possible from a given graph, such that the resulting graph has maximum vertex degree $d$ is W[1]-hard parameterized by $\tw$, while FPT when parameterized by $\tw$ and $k$.

We remark that sometimes in the literature a $d$-orientation is an orientation
where all \emph{out-degrees} are at most $d$, but this can be seen to be
equivalent to our formulation by reversing the direction of all edges. \dOr\
has sometimes been called
\textsc{Min-}$(d+1)$\textsc{-Heavy}/\textsc{Max-}$d$\textsc{-Light}
\cite{AsahiroJMO16}, depending on whether one seeks to minimize the number of
deleted vertices, or maximize the number of non-deleted vertices (the two are
equivalent in the context of exact algorithms). The problem of finding an
orientation minimizing the maximum out-degree has also been called
\textsc{Minimum Maximum Out-degree} \cite{AsahiroMOZ07}.

An important special case that has recently attracted attention from the FPT
algorithms point of view is that of $d=1$. $1$-orientable graphs are called
pseudo-forests, as they are exactly the graphs where each component contains at
most one cycle. $1$-\textsc{Orientable Deletion}, also known as
\textsc{PseudoForest Deletion}, has been shown to admit a $3^k$ algorithm,
where $k$ is the number of vertices to be deleted
\cite{BodlaenderOO16,PhilipRS18}.

\subparagraph*{Our contribution.} We study the complexity of
\dOr\ mostly from the point of view of exact FPT algorithms. We contribute a
number of new results that improve the state of the art and, in some cases,
resolve open problems from the literature.

We first consider the parameterized complexity of the problem with respect to
the natural parameter $k$, the number of vertices to be deleted to make the
graph $d$-orientable. We show that for any fixed $d\ge 2$, \dOr\ is W[2]-hard
parameterized by $k$. This result is tight in two respects: it shows that,
under the ETH, the trivial $n^k$ algorithm that tries all possible solutions is
essentially optimal; and it cannot be extended to the case $d=1$, as in this
case the problem is FPT \cite{BodlaenderOO16}. Because our proof is a reduction
from \textsc{Dominating Set} that preserves the optimal, we also show that the
problem cannot be approximated with a factor better than $\ln n$. This matches
the performance of the algorithm given in \cite{AsahiroJMO16}, and closes a gap
in the status of this problem, as the previously best known hardness of
approximation bound was $1.36$ \cite{AsahiroJMO16}.

Second, we consider the complexity of \dOr\ when restricted to chordal graphs,
motivated by the work of \cite{Bodlaender2017}, who study the problem on
classes of graphs with polynomially many minimal separators. We are able to
completely characterize the complexity of the problem for this class of graphs
with respect to the two main natural parameters $d$ and $k$: the problem is
W[1]-hard parameterized by $d$, W[2]-hard parameterized by $k$, but solvable in
time $O^*(d^{O(d+k)})$, and hence FPT when parameterized by $d+k$. We recall
that the problem is poly-time solvable on chordal graphs when $d$ is a constant
\cite{Bodlaender2017}, and trivially in P in general graphs when $k$ is a
constant, so these results are in a sense tight.

Third, we consider the complexity of \dOr\ parameterized by the input graph's
treewidth, perhaps the most widely studied graph parameter. Our main
contribution here is a lower bound which, assuming the Strong ETH, states that
the problem cannot be solved in time less than $O^*((d+2)^{\tw})$, for any constant
$d\ge 1$. As a consequence, this shows that the $O^*(3^{\tw})$ algorithm given for
\textsc{PseudoForest Deletion} in \cite{BodlaenderOO16} is optimal under the
SETH.  We recall that Bodlaender et al. \cite{BodlaenderOO16} had explicitly
posed the existence of a better treewidth-based algorithm as an open problem;
our results settle this question in the negative, assuming the SETH. Our result
also extends the lower bound of \cite{LokshtanovMS18} which showed that
\textsc{Vertex Cover} (which corresponds to $d=0$) cannot be solved in
$O^*((2-\epsilon)^{\tw})$.

Finally, we consider the complexity of the problem parameterized by
clique-width. We recall that clique-width is probably the second most widely
studied graph parameter in FPT algorithms (after treewidth), so after having
settled the complexity of \dOr\ with respect to treewidth, investigating
clique-width is a natural question.  On the positive side, we present a dynamic
programming algorithm whose complexity is roughly $d^{O(d\cdot\cw)}$, and is
therefore FPT when parameterized by $d+\cw$.  This significantly improves upon
the dynamic programming algorithm for this case given in \cite{Bodlaender2017},
which runs in time roughly $n^{O(d\cdot\cw)}$.  The main new idea of this
algorithm, leading to its improved performance, is the observation that
sufficiently large entries of the DP table can be merged using a more careful
characterization of feasible solutions that involve large bi-cliques.  On the
negative side, we present a reduction showing that \dOr\ is W[1]-hard if $\cw$
is the only parameter and $d$ is part of the input.  This presents an interesting contrast with the case of
treewidth: for both parameters we can obtain algorithms whose running time is a
function of $d$ and the width; however, because graphs of treewidth $w$ always
admit a $w$-orientation (since they are $w$-degenerate), this immediately also
shows that the problem is FPT for treewidth, while our results imply that
obtaining a similar result for clique-width is impossible (under standard
assumptions).


\section{Definitions and Preliminaries}\label{sec:defs}

\subparagraph*{Complexity background.} We assume that the reader is familiar
with the basic definitions of parameterized complexity, such as the classes FPT
and W[1] \cite{CyganFKLMPPS15}. We will also make use of (slightly weaker statements of) the \emph{Exponential
Time Hypothesis} (ETH) and its \emph{strong} variant (SETH), two conjectures by Impagliazzo et al.\ asserting that
there is no $2^{o(n)}$-time algorithm for \textsc{3-SAT} on instances with $n$
variables \cite{Impagliazzo2001} (ETH) and that \textsc{SAT} cannot
be solved in time $O^*((2-\epsilon)^n)$ for any
$\epsilon>0$~\cite{Impagliazzo2001} (SETH).

\subparagraph*{Graph widths.} We also make use of standard graph width
measures, such as pathwidth, treewidth, and clique-width, denoted as $\pw, \tw,
\cw$ respectively.

A \emph{tree decomposition} of a graph $G=(V,E)$ is a pair $(\mathcal{X},T)$ with $T=(I,F)$ a tree and $\mathcal{X}=\{X_i|i\in I\}$ a family of subsets of $V$ (called \emph{bags}), one for each node of $T$, with the following properties:
 \begin{enumerate}[1)]
  \item $\bigcup_{i\in I}X_i=V$;
  \item for all edges $(v,w)\in E$, there exists an $i\in I$ with $v,w\in X_i$;
  \item for all $i,j,k\in I$, if $j$ is on the path from $i$ to $k$ in $T$, then $X_i\cap X_k\subseteq X_j$.
 \end{enumerate}
 The \emph{width} of a tree decomposition $((I,F),\{X_i|i\in I\})$ is $\max_{i\in I}|X_i|-1$. The \emph{treewidth} of a graph $G$ is the minimum width over all tree decompositions of $G$, denoted by $\textrm{tw}(G)$ (\cite{BodlaenderK08,Bodlaender06,Bodlaender00}). The definition of \emph{pathwidth} is similar, with $T$ being a path instead of a tree in this case.

We next define the notion of \emph{clique-width} (see \cite{CourcelleMR00}): the set of graphs of cliquewidth $\textrm{cw}$ is the set of vertex-labelled graphs that can be inductively constructed by using the following operations:
\begin{enumerate}[1)]
  \item Introduce: $i(l)$, for $l\in[1,\textrm{cw}]$ is the graph consisting of a single vertex with label $l$;
  \item Join: $\eta(G,a,b)$, for $G$ having cliquewidth $\textrm{cw}$ and $a,b\in[1,\textrm{cw}]$ is the graph obtained from $G$ by adding all possible edges between vertices of label $a$ and vertices of label $b$;
  \item Rename: $\rho(G,a,b)$, for $G$ having cliquewidth $\textrm{cw}$ and $a,b\in[1,\textrm{cw}]$ is the graph obtained from $G$ by changing the label of all vertices of label $a$ to $b$;
  \item Union: $G_1\cup G_2$, for $G_1,G_2$ having cliquewidth $\textrm{cw}$ is the disjoint union of graphs $G_1,G_2$.
 \end{enumerate} Note we here assume the labels are integers in $[1,\textrm{cw}]$, for ease of exposition.
 
A \emph{clique-width expression} of width $\textrm{cw}$ for $G=(V,E)$ is a recipe for constructing a $\textrm{cw}$-labelled graph isomorphic to $G$. More formally, a cliquewidth expression is a rooted binary tree $T_G$, such that each node $t\in T_G$ has one of four possible types, corresponding to the operations given above. In addition, all leaves are introduce nodes, each introduce node has a label associated with it and each join or rename node has two labels associated with it. For each node $t$, the graph $G_t$ is defined as the graph obtained by applying the operation of node $t$ to the graph (or graphs) associated with its child (or children). All graphs $G_t$ are subgraphs of $G$ and for all leaves of label $l$, their associated graph is $i(l)$.

Additionally, we will require the equivalent definition of pathwidth via the \emph{mixed search number} $\ms(G)$ (see \cite{TAKAHASHI1995253}). In a \emph{mixed search game}, a graph $G$ is considered as a system of tunnels. Initially, all edges are contaminated by a gas and an edge is \emph{cleared} by placing searchers at both its endpoints simultaneously or by sliding a searcher along the edge. A cleared edge is re-contaminated if there is a path from a contaminated edge to the cleared edge without any searchers on its vertices or edges. A search is a sequence of operations that can be of the following types: (a) placement of a new searcher on a vertex; (b) removal of a searcher from a vertex; (c) sliding a searcher on a vertex along an incident edge and placing the searcher on the other end. A search strategy is winning if after its termination all edges are cleared. The mixed search number of $G$, denoted by $\ms(G)$, is the minimum number of searchers required for a winning strategy of mixed searching on $G$.

The following lemma from \cite{TAKAHASHI1995253} shows the relationship between $\ms(G)$ and $\pw(G)$:

\begin{lemma}\label{lem:mixed_search} \cite{TAKAHASHI1995253} For a graph $G$, it is $\pw(G)\le\ms(G)\le\pw(G)+1$. \end{lemma}

We also refer the reader to the standard textbooks \cite{CyganFKLMPPS15,CE12} and recall the following well-known
relations:

\begin{lemma}\label{lem:widths}

For all graphs $G=(V,E)$ we have $\tw(G)\le \pw(G)$ and $\cw(G)\le
\pw(G)+2$.

\end{lemma}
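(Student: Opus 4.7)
The plan is to prove the two inequalities separately since they are essentially independent in content. For $\tw(G) \le \pw(G)$, I would simply appeal to the definitions: a path decomposition is literally a tree decomposition whose underlying tree happens to be a path, so an optimal path decomposition of width $w$ serves directly as a tree decomposition of width $w$, and the inequality follows with no further argument.

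For $\cw(G) \le \pw(G)+2$, the strategy is constructive. I would take an optimal path decomposition $(X_1,\ldots,X_r)$ of width $w=\pw(G)$ and produce a clique-width expression using $w+2$ labels. The central invariant maintained while sweeping the bags from left to right is: all vertices currently in the bag carry pairwise distinct labels drawn from $\{1,\ldots,w+1\}$, while all already-processed vertices that have left the bag share a single dead label $0$. This accounts for exactly $w+2$ labels, since each bag has at most $w+1$ vertices.

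Concretely, when a new vertex $v$ enters bag $X_i$, I create it with a fresh label $\ell \in \{1,\ldots,w+1\}$ not currently in use. Then, for each neighbor $u$ of $v$ that also lies in $X_i$, I apply the join operation $\eta_{\ell,\ell(u)}$, which, thanks to the invariant that each active label class is a singleton, adds exactly the single edge $\{v,u\}$ and nothing else. When a vertex $u$ leaves the bag at the next step, I relabel its unique label to $0$ via $\rho_{\ell(u)\to 0}$. Since every edge of $G$ is contained in some bag by the definition of a path decomposition, each edge gets added exactly once, at the moment its later-introduced endpoint enters the bag.

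The main point requiring care, and the only non-routine step, is verifying that no unintended edges appear. A join $\eta_{\ell,\ell'}$ would introduce spurious edges if either label class contained more than one vertex, so I must ensure every active label is always a singleton throughout the construction. The invariant guarantees exactly this, because active labels are picked fresh at introduction and are relabeled to $0$ at forgetting, and the dead label $0$ is never used as an argument in any join. Once this is checked, $w+2$ labels suffice and the bound $\cw(G) \le \pw(G)+2$ follows.
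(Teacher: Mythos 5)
Your proof is correct. The paper does not actually prove this lemma --- it is stated as a standard relation with a pointer to textbooks --- so there is no paper proof to compare against; your argument for $\cw(G)\le \pw(G)+2$ is the standard one (sweep a path decomposition, keep the at most $\pw(G)+1$ active vertices on pairwise distinct labels so every join is between singleton classes, and park forgotten vertices on one dead label that never participates in a join), and $\tw(G)\le\pw(G)$ is indeed immediate from the definitions. Two routine bookkeeping points worth making explicit if you write this out in full: a newly introduced vertex must be attached to the partial expression by a disjoint-union operation before the joins, and at each transition between consecutive bags the departing vertices should be relabelled to the dead label \emph{before} the arriving vertex is introduced, since otherwise a full bag of $\pw(G)+1$ active vertices would leave no free label in $\{1,\ldots,\pw(G)+1\}$. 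Neither point affects correctness.
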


\subparagraph*{Graphs and Orientability.} We use standard graph-theoretic
notation.  If $G=(V,E)$ is a graph and $S \subseteq V$, $G[S]$ denotes the
subgraph of $G$ induced by $S$.  For $v \in V$, the set of neighbors of $v$ in
$G$ is denoted by $N_G(v)$, or simply $N(v)$, and $N_G(S):=(\bigcup_{v \in
S}N(v)) \setminus S$ will often be written just $N(S)$.  We define $N[v]:=N(v)
\cup \{v\}$ and $N[S]:=N(S) \cup S$. Depending on the context, we use $(u,v)$,
where $u,v \in V$ to denote either an undirected edge connecting two vertices
$u,v$, or an arc (that is, a directed edge) with tail $u$ and head $v$. An
\emph{orientation} of an undirected graph $G=(V,E)$ is a directed graph on the
same set of vertices obtained by replacing each undirected edge $(u,v)\in E$
with either the arc $(u,v)$ or the arc $(v,u)$. 
In a directed graph we define the in-degree $\delta^-(u)$ of a vertex $u$ as
the number of arcs whose head is $u$.  A $d$-orientation of a graph $G=(V,E)$
is an orientation of $G$ such that all vertices have in-degree at most $d$. If
such an orientation exists, we say that $G$ is $d$-orientable.  Deciding if a
given graph is $d$-orientable is solvable in polynomial time, even if $d$ is
part of the input \cite{AsahiroMOZ07}.  Let us first make some easy
observations on the $d$-orientability of some basic graphs.

\begin{lemma}\label{lem:clique}

$K_{2d+1}$, the clique on $2d+1$ vertices, is $d$-orientable. Furthermore, in
any $d$-orientation of $K_{2d+1}$ all vertices have in-degree $d$.

\end{lemma}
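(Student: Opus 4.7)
The plan is to use a simple counting argument combined with an explicit construction. First, I would count the edges of $K_{2d+1}$: there are $\binom{2d+1}{2} = d(2d+1)$ of them. Since in any orientation the sum of in-degrees over all vertices equals the number of arcs, the average in-degree in any orientation of $K_{2d+1}$ is exactly $d(2d+1)/(2d+1) = d$. Hence, if every vertex has in-degree at most $d$, the average forces every vertex to have in-degree exactly $d$, which immediately proves the second (``furthermore'') part of the statement, conditional on the first part.

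To establish that a $d$-orientation exists at all, I would give an explicit cyclic construction. Label the vertices $0,1,\dots,2d$ and, for each edge $\{i,j\}$ with $i\ne j$, orient it from $i$ to $j$ whenever $(j-i)\bmod (2d+1)\in\{1,2,\dots,d\}$, and from $j$ to $i$ otherwise. Each vertex $v$ then receives an arc from exactly the $d$ vertices $v-1,v-2,\dots,v-d$ (indices mod $2d+1$), so its in-degree is exactly $d$. This both constructs the required $d$-orientation and is consistent with the mandatory in-degree profile derived above.

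There is essentially no obstacle here: the counting step is immediate, and the cyclic orientation is a textbook construction. The only thing worth being careful about is that the two parts of the lemma should be proved in a logically correct order (the ``furthermore'' clause assumes that $d$-orientations exist and is really about \emph{all} of them), which is handled cleanly by first exhibiting a $d$-orientation and then invoking the global in-degree count to rule out any $d$-orientation in which some vertex has in-degree strictly less than $d$.
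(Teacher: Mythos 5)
Your proposal is correct and follows essentially the same route as the paper: the identical cyclic orientation (orient $\{i,j\}$ from $i$ to $j$ when $(j-i)\bmod(2d+1)\in\{1,\dots,d\}$) for existence, and the same edge-count/average-in-degree argument for the ``furthermore'' part. No gaps.
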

\begin{proof}

We show that there is a way to orient all the edges of any $K_{2d+1}$ so that
all vertices have in-degree exactly $d$.  To see this, number the vertices
$\{0,1,\ldots,2d\}$, and then for each $i\in\{0,\ldots,2d\}$ orient away from
vertex $i$ all edges whose other endpoint is $\{i+1,\ldots,i+d\}$, where
addition is done modulo $2d+1$. Observe that this defines the orientation of
all edges, and for each vertex it orients away from it $d$ of its $2d$ incident
edges. Hence, all vertices have in-degree $d$ in the end (see also Figure~\ref{fig:cliquegadget}).

For the second part, observe that $K_{2d+1}$ has $d(2d+1)$ edges and $2d+1$
vertices, hence in any orientation the average in-degree must be exactly $d$.
In an orientation where the maximum in-degree is $d$ we therefore also have
that the minimum in-degree is also $d$.  \end{proof}

\begin{lemma}\label{biclique_nonorientable}
 The complete bipartite graph $K_{2d+1,2d}$ is not $d$-orientable.
\end{lemma}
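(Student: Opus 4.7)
The plan is to use a simple double-counting (averaging) argument on in-degrees, in the same spirit as the second part of Lemma~\ref{lem:clique}. Assume $d\ge 1$ (the statement is only meaningful in that range) and suppose for contradiction that $K_{2d+1,2d}$ admits a $d$-orientation.

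First I would count the two relevant quantities. The graph has $|V|=4d+1$ vertices and $|E|=(2d+1)\cdot 2d=4d^{2}+2d$ edges, since every vertex on one side is connected to every vertex on the other. In any orientation, the sum of in-degrees over all vertices equals the number of edges, because each arc contributes exactly one to the in-degree of its head. Hence $\sum_{v\in V}\delta^{-}(v)=4d^{2}+2d$.

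Next I would compare this sum against what a hypothetical $d$-orientation would force. If every vertex had in-degree at most $d$, then $\sum_{v\in V}\delta^{-}(v)\le d\cdot(4d+1)=4d^{2}+d$. Combining the two yields $4d^{2}+2d\le 4d^{2}+d$, i.e.\ $d\le 0$, contradicting $d\ge 1$. Therefore at least one vertex must have in-degree strictly greater than $d$, so no $d$-orientation of $K_{2d+1,2d}$ exists.

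There is essentially no obstacle here: the argument is a one-line averaging bound that exploits the slight imbalance in the sizes of the two sides of the bipartition (the smaller side contributes $2d$ vertices, one less than the clique case, which is exactly what tips the average in-degree above $d$). The only subtlety worth noting is that the claim fails for $d=0$, where $K_{1,0}$ is a single isolated vertex and is trivially $0$-orientable; the proof above makes this implicit through the requirement $d\ge 1$.
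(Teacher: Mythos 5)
Your proof is correct and uses the same averaging argument as the paper: the sum of in-degrees equals the number of edges $4d^2+2d$, which exceeds $d(4d+1)$ for $d\ge 1$, so some vertex must have in-degree greater than $d$. Your explicit remark about the degenerate case $d=0$ is a nice touch but does not change the substance.
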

\begin{proof}
 We observe that in any orientation the sum of all in-degrees is equal to the
total number of edges, which is $4d^2+2d$.  Hence, the average in-degree is
$>d$, meaning there will always be at least one vertex of in-degree $>d$.
\end{proof}

\begin{definition}

In \dOr\ we are given as input a graph $G=(V,E)$ and an integer $d$. We are
asked to determine the smallest set of vertices $K\subseteq V$ (the \emph{deletion set}) such that
$G[V\setminus K]$ admits a $d$-orientation.

\end{definition}
\begin{definition}

In \cdOr\ we are given as input a graph $G=(V,E)$, an integer $d\ge 1$, and a
\emph{capacity} function $\mathbf{c}: V\to\{0,\ldots,d\}$. We are asked to determine
the smallest set of vertices $K\subseteq V$ such that $G[V\setminus K]$ admits
an orientation with the property that for all $u\in V\setminus K$, the
in-degree of $u$ is at most $\mathbf{c}(u)$.

\end{definition}
It is clear that \cdOr\ generalizes \dOr, which corresponds to the case where
we have $\mathbf{c}(u)=d$ for all vertices. It is, however, not hard to see that the two
problems are in fact equivalent, as shown in the following lemma.  Furthermore,
the following lemma shows that increasing $d$ can only make the problem harder.
\begin{lemma}\label{lem:saturation}

There exists a polynomial-time algorithm which, given an instance
$[G=(V,E),d,\mathbf{c}]$ of \cdOr, and an integer $d'\ge d$, produces an
equivalent instance $[G'=(V',E'),d']$ of \textsc{$d'$-Orientable Deletion}, with the same optimal value and
the following properties: $\pw(G')\le \pw(G)+2d'+1$, $\cw(G')\le \cw(G)+4$, and
if $G$ is chordal then $G'$ is chordal.

\end{lemma}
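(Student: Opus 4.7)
The plan is to simulate capacities via local gadgets: for each $u \in V$ with $\mathbf{c}(u) < d'$, I attach a disjoint clique $K_u$ on $2d'+1$ fresh vertices and add $d'-\mathbf{c}(u)$ edges joining $u$ to distinct vertices of $K_u$. By Lemma~\ref{lem:clique}, whenever $K_u$ survives in a solution, each of its vertices has in-degree exactly $d'$ from internal edges, so every edge between $K_u$ and $u$ is forced to be oriented into $u$. This consumes precisely $d'-\mathbf{c}(u)$ units of $u$'s in-degree budget in $G'$, leaving $\mathbf{c}(u)$ units for edges within $G$, which exactly simulates the capacity.

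To prove the equivalence of optima, one direction is immediate: any deletion set $K$ admitting a $\mathbf{c}$-respecting orientation of $G[V\setminus K]$ lifts to $G'$ by keeping every gadget, orienting each $K_u$ as in Lemma~\ref{lem:clique}, and orienting the $u$-$K_u$ edges toward $u$. For the converse, I first show by an exchange argument that some optimum $K$ of $G'$ avoids every $K_u$: if $K\cap K_u\ne\emptyset$, then $(K\setminus K_u)\cup\{u\}$ is no larger and remains feasible, because removing $u$ disconnects $K_u$ from the rest of $G'$ and $K_u$ is $d'$-orientable on its own. Once $K$ avoids every gadget clique, the counting half of Lemma~\ref{lem:clique} forces all $u$-$K_u$ edges to point into $u$, so the induced orientation on $G[V\setminus(K\cap V)]$ respects $\mathbf{c}$ and $|K\cap V|=|K|$.

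For the width bounds, from a path decomposition of $G$ I insert, for each $u$, a single new bag consisting of any bag containing $u$ together with all $2d'+1$ vertices of $K_u$; each clique vertex appears only in this bag, yielding $\pw(G')\le\pw(G)+2d'+1$. For clique-width, I replace the step of the $\cw(G)$-expression of $G$ that introduces $u$ with some label $i$ by a sub-expression using four fresh labels $\{\beta,\gamma,\delta,\sigma\}$: build $K_u$ incrementally with the $d'-\mathbf{c}(u)$ vertices destined to be adjacent to $u$ kept in $\gamma$ and the remaining clique vertices in $\delta$ (using $\beta$ as the label of each newly introduced vertex, joining $\beta$ with $\gamma$ and with $\delta$ and then renaming $\beta$ to $\gamma$ or $\delta$); then introduce $u$ with $\beta$, apply the join $\beta$-$\gamma$, and finish with the renames $\gamma,\delta\to\sigma$ and $\beta\to i$. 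Since $\sigma$ is never used as a source or target of a join, previously completed gadgets are untouched. Finally, chordality is preserved: any induced cycle of length $\ge 4$ in $G'$ lies entirely in $G$, entirely in some $K_u$, or traverses $u$ and a path through $K_u$; in the last case the endpoints of that path are both neighbors of $u$ in $K_u$ and are therefore adjacent in the clique, providing the required chord.

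The main obstacle I expect is the exchange argument in the converse direction, which must carefully account for the case where $u$ is also being removed, and the label economy for the clique-width bound; the count of exactly four extra labels hinges on reusing $\beta$ both as the temporary label of newly introduced clique vertices and as the initial label of $u$, and on treating $\sigma$ as a permanent sink into which each finalized gadget is deposited and never again involved in any operation.
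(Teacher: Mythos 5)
Your proposal is correct and follows essentially the same route as the paper's proof: the identical $K_{2d'+1}$ saturation gadget with $d'-\mathbf{c}(u)$ pendant edges, the same forward lifting via Lemma~\ref{lem:clique}, the same exchange argument trading deleted clique vertices for $u$ in the converse direction, and the same bag-insertion, four-label, and cut-vertex arguments for the pathwidth, clique-width, and chordality bounds respectively. The only (harmless) presentational difference is that you build each gadget clique before introducing $u$ in the clique-width expression rather than after.
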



\begin{figure}
\centering
\begin{minipage}{.4\textwidth}
\centering
\begin{tikzpicture}[scale=0.8,transform shape]
\begin{scope}[-latex]

\node[vertex] (0) at (0,0.3) {0};
\node[vertex] (1) at (-1,-0.5)  {1};
\node[vertex] (2) at (1,-0.5)  {2};
\node[vertex] (3) at (-0.5,-1.5)  {3};
\node[vertex] (4) at (0.5,-1.5)  {4};

\draw (0) -- (1);
\draw (0) -- (2);
\draw (1) -- (2);
\draw (1) -- (3);
\draw (2) to node[below]{} (3);
\draw (2) -- (4);
\draw (3) -- (4);
\draw (3) -- (0);
\draw (4) -- (0);
\draw (4) -- (1);

\node[vertex] (u) at (-2.5,-0.5) {$u$};
\draw (1) -- (u);

\end{scope}
\end{tikzpicture}
\caption{A $2$-orientation of a clique $K_5$. Observe that any edge connecting a vertex $u$ to the clique must be oriented towards $u$ (setting its \emph{capacity}) to maintain a $2$-orientation.}
\label{fig:cliquegadget}
\end{minipage}%
\hspace{0.5cm}
\begin{minipage}{.5\textwidth}
\centering

\begin{tikzpicture}[scale=0.8,transform shape]
\node[vertex] (d1) at (0,0) {$v_1$};
\node[vertex,white,fill=black] (d2) at (-0.5,-0.5)  {$v_2$};
\node[vertex] (d3) at (0.5,-0.5)  {$v_3$};
\node[vertex] (d4) at (-0.5,-1.5)  {$v_4$};
\node[vertex,white,fill=black] (d5) at (0.5,-1.5)  {$v_5$};

\draw (d3) -- (d1)--(d2) --(d3)--(d5) -- (d4) -- (d2);

\begin{scope}[xshift=1.5cm,-latex]
\foreach \x in {1,...,5}
{
\node[vertex] (v2\x) at (0.8*\x,-2) {$v_{\x}$};
}
\foreach \x in {1,3,4}
{
\node[vertex] (v1\x) at (0.8*\x,0) {$v_{\x}$};
}

\node[vertex,white,fill=gray] (v12) at (0.8*2,0) {$v_{2}$};
\node[vertex,white,fill=gray] (v15) at (0.8*5,0) {$v_{5}$};

\node () at (0.2,0) {$V_1$};
\node () at (0.2,-2) {$V_2$};

\node () at (5,0) {$\mathbf{c} = 0$};
\node () at (5,-2) {$\mathbf{c} = 1$};

\draw [-,decorate,decoration={brace,amplitude=3pt}] (4.3,-0.3) -- (4.3,-1.7) node [black,midway,xshift=0.65cm]  {$\mathbf{c} = 2$};

\node[vertex] at (0.8,-0.7) (t12)  {}; 

\draw (v21) -- (t12);
\draw (v11) -- (t12);
\draw (t12) edge[dotted,-] (v12);
\draw (v13) -- (v21);

\node[vertex] at (1.3,-0.7) (t12-2)  {}; 
\draw (v22) -- (t12-2);
\draw (v11) -- (t12-2);
\draw (t12-2) edge[dotted,-] (v12);

\node[vertex] at (2.4,-0.7) (t34-2)  {}; 
\draw (t34-2) -- (v22);
\draw (v13) -- (t34-2);
\draw (v14) -- (t34-2);

\node[vertex] at (2.4,-1.3) (t12-3)  {}; 
\draw (v23) -- (t12-3);
\draw (v11) -- (t12-3);
\draw (t12-3) edge[dotted,-] (v12);	

\node[vertex] at (3,-0.7) (t35-3)  {}; 
\draw (v13) -- (t35-3);
\draw (v15) edge[dotted,-] (t35-3);
\draw (v23) -- (t35-3);

\node[vertex] at (3.2,-1.3) (t45-4)  {}; 
\draw (v12) edge[dotted,-] (v24);
\draw (v14) -- (t45-4);
\draw (v15) edge[dotted,-] (t45-4);
\draw (v24) -- (t45-4);

\node[vertex] at (4,-0.7) (t45-5)  {}; 
\draw (v14) -- (t45-5);
\draw (v15) edge[dotted,-] (t45-5);
\draw (v25) -- (t45-5);
\draw (v13) edge[bend left=15] (v25);

\end{scope}

\end{tikzpicture}

\caption{{\bf Left:} A graph with its dominating set in black. {\bf Right:} The corresponding instance and 2-orientation, where deleted vertices are in gray. Original edges with a deleted vertex as an endpoint are dotted.
\label{WhardnessDS} }
\end{minipage}
\end{figure}

\begin{proof}

We use a saturation gadget to simulate the fact that some vertices of $G$ are
meant to be able to accept strictly fewer than $d$ incoming edges. In
particular, for each $u\in V$ for which $\mathbf{c}(u)<d'$ we construct a
clique $K_{2d'+1}$ on $2d'+1$ new vertices and connect $d'-\mathbf{c}(u)$ of these
vertices (arbitrarily chosen) to $u$. 

Let us argue that this produces an equivalent instance. First, consider a
solution to the original instance. We delete the same set of vertices from
$G'$, and use the same orientation for edges of $E$. For the added edges, there
is a way to orient all the edges of any $K_{2d'+1}$ by Lemma \ref{lem:clique}.
We therefore use this orientation for all the copies of $K_{2d'+1}$ we added,
and orient all other edges incident on such cliques away from the clique. This
does not make the in-degree of any vertex of $V$ higher than $d'$, since such
vertices have at most $\mathbf{c}(u)$ incoming edges from $E$ (by assumption),
and $d'-\mathbf{c}(u)$ edges coming from the clique.

For the converse direction, consider a solution of the new instance. We first
observe that without loss of generality we may assume that the solution does
not delete any of the vertices of the cliques we added to the graph. This is
because, if the solution deletes a vertex of a $K_{2d'+1}$ attached to $u\in
V$, we can instead delete $u$ and use the orientation described above for the
vertices of the clique (which is now disconnected from the graph). If the
solution does not delete any vertex from a $K_{2d'+1}$, by Lemma
\ref{lem:clique}, all vertices of the clique have in-degree exactly $d'$ in the
clique. This implies that all edges with one endpoint in the clique must be
oriented away from the clique, hence any vertex $u\in V$ may have in-degree at
most $\mathbf{c}(u)$ from edges of $E$.

For the width bounds, first consider any path decomposition of $G$. We can
construct a path decomposition of $G'$ as follows: for each $u\in V$ to which
we attached a $K_{2d'+1}$ we find a bag $B$ of the original decomposition that
contains $u$, and insert after $B$ a copy of the same bag into which we add all
the vertices of the clique attached to $u$. Consider now a clique-width
expression for $G$. We can use it to construct a clique-width expression for
$G'$ using three new labels by introducing a clique of size $d'-\mathbf{c}(u)$ using two labels, connecting it to $u$ using another and then completing the clique using the first two labels. Finally, to see that $G'$ is chordal if $G$
is chordal observe that no induced cycle of length $4$ or more can contain any
of the new vertices since they induce a union of cliques and are connected to
the rest of the graph through single cut-vertices.  \end{proof}



\section{Hardness of Approximation and W[2]-hardness}\label{sec:approx_whard}

In this section we present a reduction from \textsc{Dominating Set} to \dOr\
for $d\ge2$ that exactly preserves the size of the solution. As a result, this
establishes that, for any fixed $d\ge 2$, \dOr\ is W[2]-hard, and the minimum
solution cannot be approximated with a better than logarithmic factor. We
observe that it is natural that our reduction only works for $d\ge 2$, as the
problem is known to be FPT for $d=1$, which is known as \textsc{PseudoForest
Deletion}, and $d=0$, which is equivalent to \textsc{Vertex Cover}.

\begin{theorem}\label{thm:whard} For any $d\ge 2$, \dOr\ is W[2]-hard
parameterized by the solution size $k$. Furthermore, for any $d\ge 2$, \dOr\
cannot be solved in time $f(k)\cdot n^{o(k)}$, unless the ETH is false. \end{theorem}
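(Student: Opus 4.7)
The plan is to reduce from \textsc{Dominating Set}, which is W[2]-hard parameterized by solution size $k$ and admits no $f(k)\cdot n^{o(k)}$ algorithm under the ETH. Given $(H=(V_H,E_H),k)$, I would construct in polynomial time a graph $G$ such that $H$ has a dominating set of size $k$ if and only if $G$ has a deletion set of size $k$ making it $d$-orientable. Because the parameter is preserved exactly, both hardness statements transfer. By Lemma~\ref{lem:saturation} it suffices to build an equivalent instance of \cdOr{} with $d=2$ and capacities in $\{0,1,2\}$, then invoke the lemma to lift the construction to every $d\ge 2$.

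For each $v\in V_H$ I create a \emph{selector} vertex $s_v$ of capacity $0$ and a \emph{coverage gadget} $C_v$ whose purpose is to force the constraint ``at least one $s_u$ with $u\in N_H[v]$ is deleted''. Writing $N_H[v]=\{u_1,\dots,u_\ell\}$, the gadget $C_v$ consists of $\ell-1$ chain vertices $A^v_1,\dots,A^v_{\ell-1}$ of capacity $2$ and a detector $D_v$ of capacity $0$, with the following edges: $A^v_1$ is adjacent to $s_{u_1}$ and $s_{u_2}$; for $j\ge 2$, $A^v_j$ is adjacent to $A^v_{j-1}$ and $s_{u_{j+1}}$; and $A^v_{\ell-1}$ is additionally adjacent to $D_v$. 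A capacity-$0$ vertex forces every incident edge outward, so if every $s_{u_i}$ is present the chain cascades and $A^v_{\ell-1}$ is forced to receive three edges (from $A^v_{\ell-2}$, $s_{u_\ell}$ and $D_v$), violating its capacity $2$. Conversely, any single missing $s_{u_k}$ breaks the cascade at $A^v_{k-1}$ or at $A^v_1$, and the chain can then be oriented from both ends towards the break so that every $A^v_j$ ends up with in-degree exactly $2$.

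Correctness has two directions. Forward: given a dominating set $D$ of size $k$, delete $\{s_v:v\in D\}$ and in each $C_v$ reorient the chain around the break contributed by some $u\in D\cap N_H[v]$, yielding a valid $2$-orientation. Backward: given a deletion set $K$ with $|K|\le k$, I would use an exchange argument to reduce to the case $K\subseteq\{s_v:v\in V_H\}$. Whenever $K$ contains a non-selector vertex $x$ that belongs to $C_v$ (either some $A^v_j$ or $D_v$), either $K\setminus\{x\}$ is already a deletion set---in which case we simply drop $x$---or $x$ is essential, in which case replacing $x$ by $s_v$ (or, if $s_v\in K$, by some other $s_u$ with $u\in N_H[v]$) preserves feasibility: the newly deleted selector breaks the chain of $C_v$ exactly where the reinstated $x$ would have sustained the cascade, and it only relaxes the capacity constraints in every other gadget $C_{v'}$ with $v'\in N_H[v]$. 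After exhausting these swaps, $K\subseteq\{s_v:v\in V_H\}$ and $|K|\le k$; the set $\{v:s_v\in K\}$ must then intersect every $N_H[v]$ and hence is a dominating set of size at most $k$ in $H$.

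The main obstacle is the backward exchange step: one has to check, through a case analysis on where $x$ sits in the chain and on the orientations of every chain edge above and below $x$, that the swap really does restore a valid orientation of $G$; bookkeeping the in-degrees at each $A^v_j$ and at each $D_v$ while re-orienting the affected chain edges is routine but slightly delicate. Everything else---the polynomial bound on $|G|$ (which is $O(|V_H|+|E_H|)$ before the saturation lemma is applied), the invocation of Lemma~\ref{lem:saturation} to pass from capacitated $2$-orientation to $d$-orientation for every $d\ge 2$, and the transfer of the W[2]-hardness and of the $f(k)\cdot n^{o(k)}$ ETH lower bound from \textsc{Dominating Set}---is immediate.
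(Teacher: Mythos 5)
Your proposal is correct and follows essentially the same route as the paper: a solution-size-preserving reduction from \textsc{Dominating Set} to capacitated $2$-orientable deletion, with capacity-$0$ selector vertices, one per-vertex aggregation gadget whose forced cascade of orientations overloads a capacity-$2$ vertex unless some closed neighbour is deleted, a backward exchange argument pushing all deletions onto the selectors, and Lemma~\ref{lem:saturation} to lift to every $d\ge 2$. The only difference is cosmetic: you aggregate $N_H[v]$ along a caterpillar chain ending in a capacity-$0$ detector, whereas the paper uses a binary tree whose capacity-$1$ root plays the role of your detector; just make sure to handle the degenerate case $|N_H[v]|=1$ separately (e.g.\ join $D_v$ directly to $s_v$).
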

\begin{proof}

We will describe a reduction to \cdOr\ for $d=2$, from the well-known \textsc{Dominating Set} problem: we are given a graph $G=(V,E)$ and an integer $k$ and are asked if there exists a dominating set of size $k$. \textsc{Dominating Set} is
W[2]-hard and not solvable in $f(k)\cdot n^{o(k)}$ under the ETH \cite{CyganFKLMPPS15}. We
will then invoke Lemma \ref{lem:saturation} to obtain the claimed result for
\dOr.
Let $[G(V,E),k]$ be an instance of Dominating Set. We begin by constructing a
bipartite graph $H$ by taking two copies of $V$, call them $V_1,V_2$. For each
$v\in V_2$ we construct a binary tree with $|N_G[v]|$ leaves. We identify the
root of this binary tree with $v\in V_2$ and its leaves with the corresponding
vertices in $V_1$.  We now define the capacities of our vertices: each vertex
of $V_1$ has capacity $0$; each internal vertex of the binary trees has
capacity $2$; and each vertex of $V_2$ has capacity $1$.

We will now claim that $G$ has a dominating set of size $k$ if and only if $H$ can be oriented in a way that respects the capacities by deleting at
most $k$ vertices.

For the forward direction, suppose that there is a dominating set in $G$ of
size $k$. In $H$ we delete the corresponding vertices of $V_1$. We
argue that the remaining graph is orientable in a way that respects the
capacities.  We compute an orientation as follows: 

\begin{enumerate}
\item We orient the remaining
incident edges away from every vertex of $V_1$ that is not deleted.
\item For each non-leaf vertex $u$ of the binary tree rooted at $v\in V_2$ we
define the orientation of the edge connecting $u$ to its parent as follows: $u$
is an  ancestor of a set $S_u\subseteq N_G[v]$ of vertices of $V_1$. If $S_u$
contains a deleted vertex, then we orient the edge connecting $u$ to its parent
towards $u$, otherwise we orient it towards $u$'s parent.
\end{enumerate}

The above description completely defines the orientation of the remaining graph
(see also Figure~\ref{WhardnessDS}). Let us argue why the orientation respects
all capacities.  This should be clear for vertices of $V_1$.  For any non-leaf
vertex $u$ of a binary tree, if we orient the edge connecting it to its parent
away from $u$, then the in-degree of $u$ is at most $2$, which is its capacity.
On the other hand, if we orient this edge towards $u$, there is a deleted
vertex in $S_u$.  However, this implies either that one of $u$'s children has
been deleted, or that one of the edges connecting $u$ to one of its children is
oriented away from $u$. In both cases, the in-degree of $u$ is at most $2$, equal to its
capacity. Finally, for each $u\in V_2$, if we started with a dominating set,
then one of the children of $u$ in the binary tree is either deleted or its
edge to $u$ is oriented towards it. 

For the converse direction, suppose that there is a set of $k$ vertices in $H$ whose deletion makes the graph orientable in a way that respects the
capacities.  Suppose now that we have a solution that deletes some vertex $v\in
V_2$ or some internal vertex of a binary tree. We re-introduce $v$ in the
graph, orient all its incident edges towards $v$, and then delete one of the
children of $v$. This preserves the size and validity of the solution.
Repeating this argument ends with a solution that only deletes vertices of
$V_1$. We now claim that these $k$ vertices are a dominating set.  To see this,
observe that any undeleted vertex of $V_1$ has all its edges connecting it to
binary trees oriented away from it.  Hence, if there is a binary tree with root
$v\in V_2$ such that none of its leaves are undeleted, all its internal edges
must be oriented towards $v$, which would make the in-degree of $v$ greater
than its capacity.  \end{proof}

\begin{corollary}
\label{cor:approx}
For any $d\ge 2$, if there exists a polynomial-time $o(\log n)$-approximation
for \dOr, then P=NP.

\end{corollary}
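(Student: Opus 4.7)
The plan is to leverage the fact that the reduction inside the proof of Theorem \ref{thm:whard} is approximation-preserving in the strongest sense. Given an instance $G$ of \textsc{Dominating Set} on $n$ vertices, the construction builds a \cdOr-instance $H$ and then, via Lemma \ref{lem:saturation}, an equivalent \dOr-instance $H'$, both of size polynomial in $n$: the binary trees attached to vertices of $V_2$ contribute at most $O(\sum_v |N_G[v]|) = O(n^2)$ vertices in total, and the saturation gadget of Lemma \ref{lem:saturation} adds $O(d)$ vertices per original vertex. Crucially, the construction maps a minimum dominating set of $G$ of size $k^*$ to an optimal \dOr-deletion set of $H'$ of size exactly $k^*$, and the proof of Theorem \ref{thm:whard} is constructive in both directions: any feasible deletion set of size $s$ in $H'$ can be converted, in polynomial time, into one of size $s$ that deletes only vertices of $V_1$, and such a set is then a dominating set of $G$.

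Consequently, given a hypothetical polynomial-time $o(\log N)$-approximation $\mathcal{A}$ for \dOr\ (where $N=|V(H')|$), we would obtain a polynomial-time $o(\log n)$-approximation for \textsc{Dominating Set}: on input $G$, construct $H'$, run $\mathcal{A}$ on it, convert the returned deletion set into one confined to $V_1$ without increasing its size as above, and output the corresponding dominating set in $G$. Since $N = n^{O(1)}$ we have $\log N = \Theta(\log n)$, so the approximation ratio carries over with the same $o(\cdot)$ bound.

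The proof then concludes by invoking the classical inapproximability of \textsc{Dominating Set} (equivalent via a standard reduction to the inapproximability of \textsc{Set Cover}): unless P$=$NP, \textsc{Dominating Set} cannot be approximated in polynomial time within any factor $(1-\epsilon)\ln n$, and in particular admits no polynomial-time $o(\log n)$-approximation. The existence of $\mathcal{A}$ would violate this, forcing P$=$NP.

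There is no genuine technical obstacle: all the heavy lifting is already done in Theorem \ref{thm:whard}. The only items to verify are the polynomial blow-up of the reduction (immediate from the construction of $H$ and the description of the saturation gadget in Lemma \ref{lem:saturation}) and the exact preservation of optimum solution sizes in both directions (also already established). The main care required is in choosing $d'\geq d$ in Lemma \ref{lem:saturation} to be the fixed constant from the corollary's statement, so that the entire pipeline runs in polynomial time for every fixed $d\geq 2$.
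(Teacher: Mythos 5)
Your proposal is correct and follows essentially the same route as the paper: the paper's proof simply observes that the reduction of Theorem~\ref{thm:whard} exactly preserves the optimal value and invokes the known $\ln n$-inapproximability of \textsc{Dominating Set}. You have merely filled in the routine details (polynomial size of the constructed instance, $\log N = \Theta(\log n)$, and the constructive back-mapping of feasible solutions) that the paper leaves implicit.
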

\begin{proof}

We observe that the reduction from \textsc{Dominating Set} given in Theorem~\ref{thm:whard} exactly preserves the optimal value. We can therefore invoke
known hardness of approximation results on Dominating Set (see e.g.\
\cite{Moshkovitz15}).\end{proof}


\section{Chordal Graphs}\label{sec:chordal}

In this section we consider the complexity of \dOr\ on chordal graphs
parameterized by either $d$ or $k$ (the number of deleted vertices). Our main
results state that the problem is W[1]-hard for each of these parameters
individually (Theorems \ref{thm:whard-chordal} and \ref{thm:whard-chordal2});
however, the problem is FPT parameterized by $d+k$ (Theorem
\ref{thm:chordal-alg}).

\begin{theorem}\label{thm:whard-chordal}

\dOr\ is W[1]-hard on chordal graphs parameterized by $d$. Furthermore, it cannot be solved in time $n^{o(d)}$, under the ETH.

\end{theorem}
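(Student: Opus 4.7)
The plan is to reduce from \textsc{Multicolored Clique} (MCC) with $k$ color classes, which is W[1]-hard parameterized by $k$ and cannot be solved in $n^{o(k)}$ time under the ETH. Given such an instance I would construct, in polynomial time, a chordal graph $H$ together with an integer $d=\Theta(k)$ so that $H$ admits a deletion set of a prescribed size making it $d$-orientable if and only if the MCC instance is a \textsc{Yes}-instance. Both conclusions of the theorem then follow: W[1]-hardness of \dOr\ parameterized by $d$, and the $n^{o(d)}$ lower bound under the ETH. By Lemma~\ref{lem:saturation} it is enough to construct a chordal instance of \cdOr, since the saturation cliques that lemma uses to eliminate capacities are attached through single cut-vertices and hence cannot create induced cycles of length at least four.

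The workhorse of the construction is Lemma~\ref{lem:clique}: attaching a copy of $K_{2d+1}$ through a single vertex $u$ saturates $d$ units of in-degree at $u$ regardless of the rest of the orientation, and therefore acts as a precise capacity-setting device. For each color class $V_i$ I would build a chordal \emph{selection gadget} whose vertex set contains a designated copy of every $v\in V_i$, and whose capacities force all but exactly one of these copies to be deleted in any feasible solution. Concretely, this can be realized as a clique-tree composed of saturation cliques (extending the binary-tree idea of Theorem~\ref{thm:whard} but merging each parent-children layer into a clique to avoid induced $C_4$'s); the surviving copy identifies the vertex selected from $V_i$.

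Next I would add an edge-verification gadget between every pair of selection gadgets. For each pair of color classes $i<j$, I connect their selection gadgets through a shared chordal bridge whose capacity is tight. Combining Lemmas~\ref{lem:clique} and~\ref{biclique_nonorientable}, the bridge is calibrated so that the orientations forced by the attached saturation cliques push the in-degree of the survivor pair $(u,v)\in V_i\times V_j$ over its budget precisely when $uv$ is not an edge of $G$. Picking $d$ linear in $k$ provides enough in-degree slack at each survivor to accommodate the $k-1$ bridges incident to it simultaneously, while still forcing global infeasibility whenever some pair of survivors is non-adjacent in $G$. Hence a feasible solution corresponds exactly to a multicolored clique in $G$, and $d=\Theta(k)$ transports the $n^{\Omega(k)}$ ETH bound into the desired $n^{\Omega(d)}$ lower bound.

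The main obstacle is preserving chordality in the presence of $\Omega(k^2)$ bridge gadgets, since a careless encoding would easily create induced cycles spanning three or more color classes. The guiding principle will be that every bridge shares only a cut-vertex (or a small common clique) with each of the two selection gadgets it connects to, and that all internal connections inside each gadget are realized as cliques. The resulting graph then admits a clique-tree decomposition whose bags are exactly the selection cliques, bridge cliques, and saturation cliques; any induced cycle of length at least four would have to traverse several bags through their shared cut-vertices and is therefore broken by a chord. Verifying this clique-tree structure carefully, and checking that the budget and capacity arithmetic align exactly, is the most delicate part of the argument; once it is in place, a final invocation of Lemma~\ref{lem:saturation} removes capacities without breaking chordality, completing the reduction.
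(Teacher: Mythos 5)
Your high-level plan (reduce from a W[1]-hard problem with parameter $k$, produce a chordal \cdOr\ instance with $d=\Theta(k)$, then invoke Lemma~\ref{lem:saturation}) is the right frame, and your observation that Lemma~\ref{lem:clique} acts as a capacity-setting device is exactly the mechanism the paper uses. But the proposal as written has a genuine gap: the two components that carry all the difficulty --- the selection gadget and, above all, the edge-verification ``bridge'' --- are never actually constructed, and the design constraints you impose on them are in tension with each other. First, a bridge that must detect whether the \emph{specific} surviving pair $(u,v)\in V_i\times V_j$ is adjacent, while touching each selection gadget only through a cut-vertex or small clique, has to read off \emph{which} of the $n$ candidates survived through that small separator using in-degrees bounded by $d=\Theta(k)$; the paper's index-encoding trick of this kind (Section~\ref{sec:cw_whard}, capacities $n-l-1$ and $l-1$) inherently needs $d=\Theta(n)$. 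The natural low-$d$ alternative --- one conflict vertex per non-edge, adjacent to the two relevant copies --- destroys chordality, because a vertex adjacent to two non-adjacent vertices sitting in different selection gadgets creates induced cycles of length at least four, and your cut-vertex/clique-tree argument does not apply to a gadget that must be simultaneously attached to two different bags. Second, you ask each survivor to have ``enough slack'' for $k-1$ legitimate bridges yet ``no slack'' for an illegitimate one; making this arithmetic exact again requires the gadget to know the survivor's identity, which is the unresolved point above.

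The paper avoids all of this with a much more economical reduction, from plain \textsc{Independent Set} (size $k$, assumed odd), setting $d=k$: subdivide every edge of $G$ (subdivision vertices get capacity $1$), turn $V$ into a clique, and give every original vertex capacity $\frac{d-1}{2}$. The result is a split graph, hence chordality is free --- every subdivision vertex's two neighbours are adjacent, so no induced $C_4$ can arise. The budget is $n-k$ deletions; the $k=d$ survivors form a $K_{2\cdot\frac{d-1}{2}+1}$, and the second part of Lemma~\ref{lem:clique} guarantees they all have in-degree \emph{exactly} $\frac{d-1}{2}$ from the clique, i.e.\ zero slack. An uncovered edge then forces its capacity-$1$ subdivision vertex to push one edge back into the clique, violating some capacity. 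In other words, the ``selection'' is the entire independent set at once inside a single clique, and the ``verification'' is the global average-degree argument of Lemma~\ref{lem:clique} --- no per-pair bridges, no cross-gadget chordality issues, and no index encoding. If you want to rescue your multicolored-clique architecture you would essentially be forced back to one global clique containing all copies (so that conflict vertices see adjacent endpoints), at which point you have rederived the paper's construction.
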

\begin{proof}

We give a reduction from \textsc{Independent Set}: given a graph $G=(V,E)$ and an integer $k$, we
are asked if $G$ contains an independent set of size $k$. We assume without loss
of generality that $k$ is odd. We set $d=k$ and we will construct a chordal
instance of \cdOr, on which we will invoke Lemma \ref{lem:saturation} to obtain
the claimed result, together with the standard fact that \textsc{Independent
Set} is W[1]-hard and not solvable in $n^{o(k)}$ under the ETH \cite{CyganFKLMPPS15}.

We construct a graph $G'$ from $G$ as follows.  First, we subdivide all edges
of $E$, and then we connect all (original) vertices of $V$ into a clique.  Let
us now define the capacities: for any vertex introduced during a subdivision of
an edge $e\in E$ we set its capacity to 1.  Furthermore, for each $u\in V$ we
set its capacity to $\frac{d-1}{2}$.  This completes the construction (see also
Figure~\ref{chordalIS}), while our budget will be set to $n-k$. Observe that we have constructed a split graph,
therefore $G'$ is chordal.

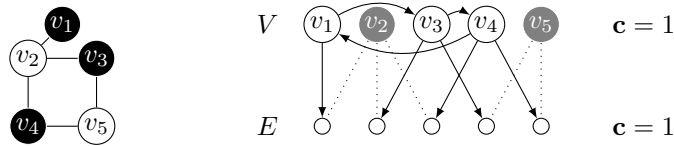
\begin{figure}
\centering

\begin{tikzpicture}[scale=0.9]
\node[vertex,white,fill=black] (d1) at (0,0) {$v_1$};
\node[vertex] (d2) at (-0.5,-0.5)  {$v_2$};
\node[vertex,white,fill=black] (d3) at (0.5,-0.5)  {$v_3$};
\node[vertex,white,fill=black] (d4) at (-0.5,-1.5)  {$v_4$};
\node[vertex] (d5) at (0.5,-1.5)  {$v_5$};

\draw (d1)--(d2) --(d3)--(d5) -- (d4) -- (d2);

\begin{scope}[xshift=3cm,-latex]

\foreach \x in {1,...,5}
{
\node[vertex] (e\x) at (0.8*\x,-1.5) {};
}
\foreach \x in {1,3,4}
{
\node[vertex] (v\x) at (0.8*\x,0) {$v_{\x}$};
}

\node[vertex,white,fill=gray] (v2) at (0.8*2,0) {$v_{2}$};
\node[vertex,white,fill=gray] (v5) at (0.8*5,0) {$v_{5}$};

\node () at (0.8*0,0) {$V$};
\node () at (0.8*0,-1.5) {$E$};
\node () at (5.5,0) {$\mathbf{c}=1$};
\node () at (5.5,-1.5) {$\mathbf{c}=1$};

\draw (v1) -- (e1);
\draw (v2) edge[dotted,-] (e1);

\draw (v3) -- (e2);
\draw (v2) edge[dotted,-] (e2);

\draw (v4) -- (e3);
\draw (v2) edge[dotted,-] (e3);

\draw (v3) -- (e4);
\draw (v5) edge[dotted,-] (e4);

\draw (v4) -- (e5);
\draw (v5) edge[dotted,-] (e5);

\draw (v1) edge[bend left] (v3);
\draw (v3) edge[bend left] (v4);
\draw (v4) edge[bend left] (v1);

\end{scope}

\end{tikzpicture}
\caption{A sample graph with its independent set of size 3 in black on the left. The corresponding constructed instance with its 3-orientation on the right. The $n-k=2$ deleted vertices are drawn in gray. Original edges with a deleted vertex as an endpoint are drawn dotted. 
For clarity, edges of the clique in $V$ for deleted vertices are not drawn. \label{chordalIS} }
\end{figure}

Suppose that there is an independent set of size $k$ in $G$, hence there is a
vertex cover of size $n-k$. We delete the corresponding $n-k$ vertices of $G'$ and claim that
the remaining graph is orientable in a way that respects the capacities.  For
every edge $e\in E$, its corresponding vertex has a deleted neighbor (since we
started by deleting the vertices corresponding to a vertex cover). Hence, it now has degree at most $1$, which is at
most equal to its capacity, so we orient its possible remaining incident edge
towards it.  Finally, for the $k=d$ undeleted vertices of $V$, we use Lemma
\ref{lem:clique} to obtain an orientation of the $K_d$ they induce where all
vertices have in-degree $\frac{d-1}{2}$, which is equal to their capacities.

For the converse direction, suppose that it is possible to orient the graph
respecting the capacities by deleting at most $n-k$ vertices. To simplify
things we assume we have a solution that deletes exactly $n-k$ vertices, which
can be achieved by adding arbitrary vertices to a smaller solution. If a
solution deletes a vertex corresponding to $e\in E$, we can place that vertex
back, orient all edges towards it, and if it now has in-degree $2$ arbitrarily
delete one of its neighbors in $V$. We therefore suppose that the solution
deletes $n-k$ vertices of $V$ and we must show that these vertices are a vertex
cover of $G$.  Suppose for contradiction that there is an edge $e\in E$ such
that neither of its endpoints in $V$ was deleted. We now observe that in any
orientation one edge connecting the vertex produced in the subdivision of $e$
to $V$ must be oriented towards $V$, because the vertex corresponding to $e$
has capacity $1$. However, the $d$ undeleted vertices of $V$ form a clique, and
by Lemma \ref{lem:clique}, any orientation of the edges of this clique that
gives all vertices in-degree at most $\frac{d-1}{2}$ (their capacities), gives
all vertices exactly this in-degree. Hence, the additional edge from $e$ will
force one non-deleted vertex to violate its capacity.  \end{proof}

\begin{theorem}\label{thm:whard-chordal2}
\dOr\ is W[2]-hard on chordal graphs parameterized by
the solution size $k$ and cannot be solved in time
$n^{o(k)}$ under the ETH, when $d$ is part of the input.
\end{theorem}
   \begin{proof} 
   We start from an instance of \textsc{Dominating Set}: we are given a graph $G=(V,E)$ and
   an integer $k$ and are asked if there exists a dominating set of size $k$. We
   will retain the same value of $k$ and construct a chordal instance of \cdOr,
   for which we later invoke Lemma \ref{lem:saturation}.  Let $|V|=n$ and we
   assume without loss of generality that $n-k$ is odd (otherwise we can add an
  isolated pair of vertices to $G$ connected by an edge).  
   We construct $G'$ as follows. Take two copies of $V$, call them $V_1,V_2$ and
   add all possible edges between vertices of $V_2$. For each $u\in V$, we connect
   $u\in V_1$ with all vertices $v\in V_2$ such that $v\in N_G[u]$, i.e.\ all vertices $v$ that are neighbors of $u$ in $G$. Let us also
   define the capacities: each $u\in V_1$ has capacity equal to its degree in $G$; each $u\in V_2$
   has capacity $\frac{n-k-1}{2}$.  This completes the construction. $G'$ is
   chordal because it is a split graph.
   
   Suppose that $G$ has a dominating set of size $k$. We delete the corresponding vertices
   of $V_2$ and claim that $G'$ becomes orientable. We observe that all vertices
   of $V_1$ have at least a deleted neighbor, since we deleted a dominating set of
   $G$, hence for each such vertex the number of remaining incident edges is at
   most its capacity.  We therefore orient all edges incident on $V_1$ towards
   $V_1$.  Finally, for the remaining vertices of $V_2$ which induce a clique of
   size $n-k$ we orient their edges using Lemma \ref{lem:clique} so that they all
   have in-degree exactly $\frac{n-k-1}{2}$.

   For the converse direction, suppose we can delete at most $k$ vertices of the
   new graph to make it orientable respecting the capacities. Again, as in Theorem
   \ref{thm:whard-chordal} we assume we have a solution of size exactly $k$,
   otherwise we add some vertices.  Furthermore, any used vertex of $V_1$ can be
   exchanged with one of its neighbors in $V_2$, since all vertices of $V_1$ have
   degree one more than their capacities, hence we assume that the solution
   deletes $k$ vertices of $V_2$.  We show that these vertices are a dominating
   set of $G$.  Suppose for contradiction that they are not, so $u\in V_1$ does
   not have any deleted neighbors in $V_2$.  Since the number of edges
   connecting $u\in V_1$ to $V_2$ is equal to the degree of $u$ plus 1, at least one of them is oriented towards $V_2$.
   But now the $n-k$ non-deleted vertices of $V_2$, because of Lemma
   \ref{lem:clique} all have in-degree exactly equal to the capacities inside the
   clique they induce. Hence, the additional edge from $V_1$ will force a vertex
   to violate its capacity.  \end{proof}

\begin{theorem}\label{thm:chordal-alg}
\dOr\ can be solved in time $d^{O(d+k)}\cdot n^{O(1)}$ on chordal graphs, where $k$
is the size of the solution.
\end{theorem}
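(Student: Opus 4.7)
The plan is to combine a structural observation about chordal graphs with a standard treewidth-based DP. The key observation is that a clique of size at least $2d+2$ cannot be $d$-oriented: $K_{2d+2}$ has $(d+1)(2d+1)$ edges on $2d+2$ vertices, so the average (and hence the maximum) in-degree in any orientation is strictly greater than $d$, mirroring the counting used in Lemma~\ref{biclique_nonorientable}. Consequently, if $G$ has a clique $C$ with $|C|>2d+1+k$, then for every $K\subseteq V$ with $|K|\le k$ we still have $|C\setminus K|\ge 2d+2$, so the instance is a no-instance.

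Since the maximum clique of a chordal graph can be computed in polynomial time, I would first run this test and reject if $\omega(G)>2d+1+k$. Otherwise $\omega(G)\le 2d+k+1$, and because chordal graphs satisfy $\tw(G)=\omega(G)-1$, the treewidth is at most $2d+k$. Moreover, a clique tree, which is directly a tree decomposition of this width whose bags are the maximal cliques of $G$, can be computed in polynomial time from the chordal structure.

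On this decomposition I would run a standard DP for \dOr. Using the edge-introduction variant of nice tree decompositions, each edge of $G$ is introduced at exactly one node, where its orientation is chosen. A DP state at a bag $B$ assigns each $v\in B$ either a ``deleted'' flag or a value in $\{0,1,\ldots,d\}$ counting the in-arcs that $v$ has already received from edges introduced below, together with the total number of deletions made in the current subtree. The transitions at introduce, forget, join and introduce-edge nodes are routine: at an introduce-edge step for $uv$ we branch on the two possible orientations, increment the counter of the head, and discard the branch if that counter exceeds $d$; at a join we add the counter contributions of the two children, subtracting the baseline so we do not double-count arcs already guessed in the shared bag. The number of distinct states per bag is at most $(d+2)^{|B|}\cdot(k+1)\le (d+2)^{2d+k+1}\cdot(k+1)$, so the whole computation runs in time $(d+2)^{O(d+k)}\cdot n^{O(1)}=d^{O(d+k)}\cdot n^{O(1)}$.

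I do not expect any deep obstacle here: once the bound $\tw(G)\le 2d+k$ is in place both the DP and the preprocessing are essentially textbook. The main thing to be careful about is the bookkeeping on the join nodes and the attribution of each original edge to a unique bag, so that every edge's orientation is guessed exactly once and capacities are counted consistently across subtrees.
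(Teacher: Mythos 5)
Your proposal is correct and follows essentially the same route as the paper: bound the maximum clique by $2d+k+1$ via the non-orientability of $K_{2d+2}$ (so the instance can be rejected otherwise), use the fact that chordal graphs admit an optimal tree decomposition computable in polynomial time, and run the standard $(d+2)$-states-per-vertex dynamic program over the resulting decomposition of width $O(d+k)$. The extra bookkeeping details you give (edge-introduction nodes, join-node baseline subtraction) are exactly the routine steps the paper leaves implicit.
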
 
\begin{proof}
 
 This proof relies on standard techniques (dynamic programming on tree
 decompositions), so we will sketch some of the details. Recall that given a
 chordal graph $G$, it is known that we can obtain in polynomial time an optimal
 tree decomposition of $G$ of width $\omega(G)$, where $\omega(G)$ is the
 maximum clique size of $G$. We now observe that, because of Lemma
\ref{lem:clique}, we can assume that $\omega(G)< 2d+k+2$, because if the graph
 contains a clique on $2d+k+2$ vertices, even after deleting $k$ vertices we
 will not be able to produce a $d$-orientation and we can immediately reject. We
 therefore have $\tw(G)\le 2d+k+1$.
 
 Our algorithm now performs standard dynamic programming on the given tree
 decomposition, similarly to the algorithm of \cite{BodlaenderOO16}: we maintain
 a table in each bag which, for each vertex of the bag states either that the
 vertex has been deleted, or its in-degree in the orientation of the current
 partial solution. Since the in-degree is a number in $\{0,\ldots,d\}$, each
 vertex has $d+2$ possible states. This makes the total size of the DP table at
 most $(d+2)^{\tw} \le (d+2)^{2d+k+1}$. It is now not hard to see that such a
 table can be updated in time polynomial in its size, giving us a solution at
 the root of the tree decomposition.  \end{proof}

 
 \section{SETH Lower Bound for Treewidth}\label{sec:seth_tw}

\subparagraph*{Overview.} We follow the approach for proving SETH lower bounds
for treewidth algorithms introduced in \cite{LokshtanovMS18} (see also Chapter
14 in \cite{CyganFKLMPPS15}), that is, we present a reduction from \textsc{SAT}
to \dOr, for any fixed $d\ge1$, showing that if there exists a better than $(d+2)^{\tw}$ algorithm for
\dOr, we obtain a better than $2^n$ algorithm for \textsc{SAT}.

Similarly to these proofs, our reduction is based on the construction of ``long
paths'' of \emph{Block gadgets}, that are serially connected in a path-like
manner. Each such ``path'' corresponds to a group of variables of the given
formula, while each \emph{column} of this construction is associated with one
of its clauses. Intuitively, our aim is to embed the $2^n$ possible variable
assignments into the $(d+2)^\tw$ states of some optimal dynamic program that
would solve the problem on our constructed instance. The hard part of the
reduction is to take the natural $d+2$ options available for each vertex,
corresponding to its in-degree ($d+1$) or the choice to delete it ($+1$), and
use them to compress $n$ boolean variables into roughly $\frac{n}{\log(d+2)}$
units of treewidth.

Below, we present a sequence of gadgets used in our reduction. The
aforementioned block gadgets, which allow a solution to choose among $d+2$
reasonable choices, are the main ingredient. We connect these gadgets in a
path-like manner that ensures that choices remain consistent throughout the
construction, and connect clause gadgets in different ``columns'' of the
constructed grid in a way that allows us to verify if the choice made
represents a satisfying assignment, without increasing the graph's treewidth.

\subparagraph*{OR gadget.} We use an OR gadget with two endpoints $v,u$  whose purpose is to ensure that in any optimal solution, either $v$ or $u$ will have to be deleted. This gadget is simply a set of $2d+2$ vertices of capacity 1, connected to both $v$ and $u$, as shown in Figure \ref{fig:seth_orientable_tw_or}.

\begin{figure}[htbp]
\centering
  \includegraphics[width=40mm]{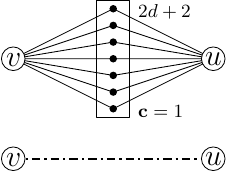}
\caption{An example OR gadget. In the following, OR gadgets are shown as dotted edges.}
 \label{fig:seth_orientable_tw_or}
\end{figure}

\begin{figure}[htbp]
  \centering
  \includegraphics[width=50mm]{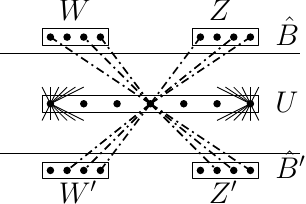}
  \caption{Example connections between a set $U$ and the $p$ sets $W,Z$ in the gadgets $\hat{B}$ of its group.}
  \label{fig:seth_orientable_tw_assignment}
\end{figure}

\begin{lemma}\label{or_gadget}
Let $G=(V,E), \mathbf{c}:V\to\{0,\ldots,d\}$ be an instance of \cdOr, and $u,v$
be two vertices connected via an OR gadget. Then, any optimal solution must
delete at least one of $u,v$.  \end{lemma}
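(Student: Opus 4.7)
\medskip

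\noindent\textbf{Proof plan.} The plan is to argue by contradiction: suppose $S$ is an optimal solution with $u,v\notin S$, and derive a strictly smaller feasible solution $S'$. Let $W=\{w_1,\dots,w_{2d+2}\}$ denote the gadget vertices, each of capacity $1$, and recall each $w_i$ is adjacent to both $u$ and $v$ (and to nothing else outside the gadget).

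\medskip

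\noindent\textbf{Step 1 (capacity counting).} I would first observe that any non-deleted $w_i$ has capacity $1$ but is incident to both the edges $(u,w_i)$ and $(v,w_i)$, so at least one of these two edges must be oriented away from $w_i$, i.e.\ into $\{u,v\}$. Hence, summing over the surviving gadget vertices, the in-degrees contributed to $u$ and $v$ by the gadget satisfy $\delta^-_W(u)+\delta^-_W(v)\ge |W\setminus S|$. Since $\mathbf{c}(u),\mathbf{c}(v)\le d$, we must have $|W\setminus S|\le 2d$, which forces $|S\cap W|\ge 2$.

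\medskip

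\noindent\textbf{Step 2 (swap to a smaller solution).} Next I would define $S':=(S\setminus W)\cup\{u\}$, which by the previous step satisfies $|S'|\le |S|-2+1=|S|-1$. To exhibit a valid orientation of $G[V\setminus S']$, I keep every edge not incident to $W\cup\{u\}$ oriented exactly as in $S$; all edges incident to $u$ have disappeared since $u$ is now deleted. For the $2d+2$ remaining gadget edges $\{(v,w_i)\}$, I would pick $x$ of them to orient as $v\to w_i$ and the rest as $w_i\to v$, choosing $x$ large enough so that $v$'s total in-degree does not exceed $\mathbf{c}(v)$. Concretely, if $b_v$ denotes $v$'s in-degree from edges outside the gadget under the orientation inherited from $S$, I need $b_v+(2d+2-x)\le \mathbf{c}(v)$, i.e.\ $x\ge 2d+2+b_v-\mathbf{c}(v)$; this is feasible (and gives $x\in[0,2d+2]$) because $b_v\le \mathbf{c}(v)$ already held for $v$ in $S$. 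Each $w_i$ then receives in-degree at most $1$, matching its capacity.

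\medskip

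\noindent\textbf{Step 3 (conclusion).} Thus $S'$ is a valid deletion set of strictly smaller size than $S$, contradicting the optimality of $S$. Therefore any optimal solution must delete at least one of $u,v$. The main technical point, and the only place the argument is not entirely mechanical, is the last step of verifying that, after deleting $u$, one can re-orient the gadget edges at $v$ so that $v$'s capacity is respected; this hinges on the inequality $b_v\le \mathbf{c}(v)$ inherited from the feasibility of $S$, together with the fact that the $w_i$'s have no external neighbours and so absorb the ``overflow'' at capacity $1$ each.
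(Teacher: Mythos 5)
Your proof is correct and follows essentially the same route as the paper: both count that a surviving gadget vertex of capacity $1$ must send at least one edge into $\{u,v\}$, deduce from $\mathbf{c}(u),\mathbf{c}(v)\le d$ that at least two gadget vertices are deleted, and then swap those deletions for $u$, re-orienting the now degree-$1$ gadget vertices within their capacity. Your Step~2 is slightly more elaborate than necessary (orienting all remaining gadget edges toward the $w_i$'s already works, since each then has in-degree $1$), but the argument is sound.
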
 
\begin{proof}

Suppose there exists an optimal solution that does not delete $u$ or $v$.
Because there are $2d+2$ internal vertices in the gadget, and at most $2d$ of
the edges of the gadget can be oriented towards $u$ or $v$, there are at least
two internal vertices of the gadget that are deleted by the solution. We place
these vertices back in the graph and delete $u$ in their place. We now have a
strictly smaller solution, for which we can obtain a valid orientation, since
the re-introduced vertices have capacity $1$ and degree $1$. This contradicts
the optimality of the initial solution.\end{proof}

\begin{lemma}\label{or_gadget2}

Let $G=(V,E), \mathbf{c}:V\to\{0,\ldots,d\}$ be an instance of \cdOr, and $G'$
be the graph obtained from $G$ by deleting, for each pair of vertices $u,v\in V$ which
are connected through an OR gadget, all internal vertices of the
gadget and adding the edge $(u,v)$ if it does not exist. It is $\pw(G)\le
\pw(G')+1$.

\end{lemma}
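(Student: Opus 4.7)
I would prove the bound by constructing a path decomposition of $G$ from an optimal path decomposition of $G'$, at the cost of increasing the maximum bag size by exactly one. The key observation is structural: every internal vertex of an OR gadget with endpoints $u,v$ has neighborhood exactly $\{u,v\}$, and by construction of $G'$ the edge $(u,v)$ is present in $G'$, so any path decomposition of $G'$ must contain at least one bag holding both $u$ and $v$. Such a bag can serve as an ``anchor'' into which each internal vertex is individually spliced.

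\textbf{Construction.} Let $\mathcal{P}'$ be an optimal path decomposition of $G'$, so every bag has size at most $\pw(G')+1$. For each OR gadget in $G$ with endpoints $u,v$ and internal vertices $w_1,\ldots,w_{2d+2}$, fix an anchor bag $B=B(u,v)\in\mathcal{P}'$ with $\{u,v\}\subseteq B$, and insert immediately after $B$ the block of new bags $B\cup\{w_1\},\,B\cup\{w_2\},\,\ldots,\,B\cup\{w_{2d+2}\}$. If several OR gadgets pick the same anchor, simply concatenate the corresponding blocks. Call the resulting sequence $\mathcal{P}$.

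\textbf{Verification.} I would then check the three path-decomposition axioms. Every vertex of $G$ is covered: original vertices keep their original bags, and each internal vertex $w_i$ appears in the new bag $B(u,v)\cup\{w_i\}$. Every edge of $G$ is covered: edges between two original vertices survive in $G'$ and are therefore covered by $\mathcal{P}'$, while the edges $(u,w_i)$ and $(v,w_i)$ are covered by $B(u,v)\cup\{w_i\}$. Contiguity for each internal vertex is trivial since it sits in exactly one bag. Contiguity for an original vertex $x$ follows from the standard fact that if an anchor $B$ lies strictly between two bags of $\mathcal{P}'$ containing $x$, then $B$ already contains $x$, so every spliced-in bag inherits $x$; hence the set of bags containing $x$ remains an interval. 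Each inserted bag has size $|B|+1\le \pw(G')+2$, so the width of $\mathcal{P}$ is at most $\pw(G')+1$.

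\textbf{Main obstacle.} There is no real obstacle; the only minor subtlety is the contiguity check when several gadgets share endpoints, which is handled uniformly by the observation above. The core of the argument is simply that each internal vertex of an OR gadget, having only the two endpoints as neighbors, can be absorbed into a single new bag that extends an anchor already containing both endpoints, so at most one additional slot per bag ever suffices.
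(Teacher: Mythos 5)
Your proposal is correct and follows essentially the same route as the paper: find a bag containing both endpoints $u,v$ (which exists because $G'$ has the edge $(u,v)$), and splice in copies of that bag, each augmented by one internal gadget vertex, increasing the width by at most one. The paper's proof is just a terser version of yours; your explicit contiguity check is a fine (if routine) addition.
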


\begin{proof}

Consider a path decomposition of $G'$ and an OR gadget of $G$ between vertices
$u,v$. Since $G'$ contains the edge $(u,v)$, there is a bag containing both $u$
and $v$. We insert immediately after this bag $2d+2$ copies of it, and insert
into each copy one of the internal vertices of the OR gadget. Repeating this
for all OR gadgets gives a path decomposition of $G$ of width $1$ more than the
original decomposition.  \end{proof}

\subparagraph*{Clause gadget $\hat{C}(N)$.}
This gadget is responsible for determining clause satisfaction, based on the choices made in the rest of the graph. It is identical to the one used for \textsc{Independent Set} in \cite{LokshtanovMS18}, as finding a maximum independent set can be seen as equivalent to finding a minimum-sized deletion set for 0-orientability.
The gadget consists of a sequence of $N$ triangles, where every pair of consecutive triangles is connected by two edges, along with two pendant vertices attached to the first and last triangle. We set the capacities of all vertices to 0 and refer to the $N$ degree-2 vertices within the triangles as the \emph{inputs}. Figure \ref{fig:seth_orientable_tw_clause} provides an illustration.
For a clause $C_{\mu}$ with $q_{\mu}$ literals, the gadget's purpose is to offer an $1$-in-$q_{\mu}$ choice, while its pathwidth remains constant: in any valid solution, there must be at least two vertices deleted from every triangle, but not all of the inputs, as this would leave a non-orientable edge between two capacity-0 vertices. In our final construction, the non-deleted leaf will correspond to some true literal within the clause.

\begin{figure}[htbp]
\centerline{\includegraphics[width=70mm]{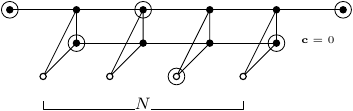}}
\caption{An example $\hat{C}(N)$. Note circled vertices forming an independent set, the rest being a minimum deletion set for 0-orientability.}
\label{fig:seth_orientable_tw_clause}
\end{figure}

 \begin{lemma}\label{clause_gadget}
  The minimum size of a deletion set $K$ in $\hat{C}(N)$ is equal to $2N$, while this cannot include all input vertices and $\pw(\hat{C}(N))\le3$.
 \end{lemma}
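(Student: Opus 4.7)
The plan is to reduce the statement to a question about independent sets. Since every vertex of $\hat{C}(N)$ has capacity $0$, any valid orientation must direct every edge toward a deleted endpoint, so the complement of $K$ in $V(\hat{C}(N))$ is an independent set. Consequently $|K| = (3N+2) - \alpha(\hat{C}(N))$, and the first two claims follow once we show $\alpha(\hat{C}(N)) = N+2$ and that every maximum independent set must contain at least one input vertex.

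For the upper bound $\alpha(\hat{C}(N)) \le N+2$, I count that each of the $N$ triangles contributes at most one vertex to an independent set and the two pendants contribute at most two more. The matching lower bound is witnessed by the explicit independent set consisting of all $N$ inputs together with both pendants: each input is adjacent only to the two other vertices of its own triangle (so inputs are pairwise non-adjacent), and each pendant is adjacent only to a vertex of the first or last triangle (distinct from the inputs). Therefore $|K| = 2N$.

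The crucial step is ruling out a minimum $K$ that deletes every input. Suppose for contradiction such a $K$ exists. Then the surviving set $V(\hat{C}(N)) \setminus K$ is an independent set of size $N+2$ consisting solely of one non-input per triangle together with both pendants. The two edges connecting each consecutive pair of triangles constrain which non-inputs can simultaneously survive in $T_i$ and $T_{i+1}$, which forces a rigid (e.g.\ alternating or uniform) pattern for the chosen non-input along the whole chain; the two pendant edges at the first and last triangle further fix which non-input of the end-triangles can be chosen, since a surviving pendant would create an in-edge into a capacity-$0$ non-input unless that non-input is itself deleted. A direct case analysis over the possible patterns shows that the pendant constraints at the two endpoints are mutually incompatible with any such forced pattern, producing the desired contradiction. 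This endpoint case analysis is the main obstacle of the proof.

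Finally, for $\pw(\hat{C}(N)) \le 3$, I exhibit a linear path decomposition sliding along the chain: each triangle bag holds the three vertices of $T_i$ (augmented with the pendant for $i \in \{1,N\}$), and each transition bag between $T_i$ and $T_{i+1}$ contains the four endpoints of the two connecting edges. Every bag has size at most $4$, so the pathwidth is at most $3$.
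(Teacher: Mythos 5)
Your overall route is the same as the paper's: since every capacity is $0$, no edge may survive the deletion, so $V(\hat{C}(N))\setminus K$ must be an independent set and the lemma reduces to statements about $\alpha(\hat{C}(N))$. Your computation $\alpha(\hat{C}(N))=N+2$ (at most one vertex per triangle plus the two pendants, matched by the set of all $N$ inputs together with both pendants) and your width-$3$ path decomposition are correct, and in fact more self-contained than the paper's proof, which simply defers to Lemmas 2 and 3 of Lokshtanov--Marx--Saurabh (Claim~14.39 in Cygan et al.).

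The gap is in the one claim that carries real content: that no maximum independent set avoids all inputs. You reduce this to ``a direct case analysis over the possible patterns'' of surviving non-input vertices along the chain and assert that the two pendant constraints are mutually incompatible with every such pattern --- but you never carry out the analysis, and it cannot be carried out from what you have written, because you never fix which pairs of non-input vertices the two connecting edges join (you hedge between an ``alternating'' and a ``uniform'' forced pattern) nor to which vertices of the first and last triangles the two pendants attach. These choices are not cosmetic. Writing $a_i,b_i$ for the two non-input vertices of the $i$-th triangle: if the connecting edges happened to force the uniform pattern (all $a_i$ or all $b_i$) and both pendants happened to attach to $a$-vertices, then the all-$b_i$ pattern together with both pendants would be a maximum independent set containing no input, and the claim would be false. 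The lemma holds only because the Lokshtanov--Marx--Saurabh gadget attaches the pendants so that the two endpoint constraints exclude complementary patterns (and, in the alternating variant, because $N$ is even --- note the paper's footnote padding every clause to an even number of literals). To close the gap you must either specify that exact edge structure and then give the short incompatibility argument, or do what the paper does and cite the corresponding lemmas of \cite{LokshtanovMS11a} directly.
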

\begin{proof}
 We observe that, due to the capacities of all vertices being set to 0, there can be no edge left in the graph after deletion of $K$, meaning $\hat{C}(N)\setminus K$ must be an independent set. The lemma then follows from lemmas 2 and 3 of \cite{LokshtanovMS18} (see also Claim~14.39 in \cite{CyganFKLMPPS15}).
\end{proof}

\subparagraph*{Block gadget $\hat{B}$.} This gadget is the basic building block of our construction:
\begin{enumerate}
 \item Make three vertices $a,a',b$. Note that in the final construction, our block gadgets will be connected serially, with vertex $a'$ being identified with the following gadget's vertex $a$.
 \item Make three independent sets $X\coloneqq\{x_1,\dots,x_d\}$, $Y\coloneqq\{y_1,\dots,y_d\}$, $Q\coloneqq\{q_1,\dots,q_{2d+1}\}$.
 \item Make two sets $W\coloneqq\{w_0,\dots,w_{d+1}\}$ and $Z\coloneqq\{z_0,\dots,z_{d+1}\}$. 
 \item Connect all vertices of $X$ with vertex $a$ and with all vertices of $Q$.
 \item Connect all vertices of $Y$ with vertex $a'$ and with all vertices of $Q$.
 \item Connect all vertices from $W$ except $w_{d+1}$ to $b$ and all vertices of $X$.
 \item Connect all vertices from $Z$ except $z_{d+1}$ to $b$ and all vertices of $Y$.
 \item Attach OR gadgets between the pairs: $a$ and $b$, $b$ and $a'$, $a$ and $w_{d+1}$, $a'$ and $z_{d+1}$.
 \item Attach OR gadgets between any pair of vertices in $W\cup Z$, except for the pairs $(w_i,z_i)$ for $i\in\{0,\ldots,d+1\}$. In other words, $W\cup Z$ is an OR-clique, minus a perfect matching.
\end{enumerate}

We set the capacities as follows (see also Figure \ref{fig:seth_orientable_tw_block}).

\begin{itemize}
 \item $\mathbf{c}(a)=\mathbf{c}(a')=d$, $\mathbf{c}(b)=0$.
 \item $\forall i\in[1,2d+1],\mathbf{c}(q_i)=d$, and $\forall i,j\in[1,d],\mathbf{c}(x_i)=\mathbf{c}(y_j)=0$.
 \item $\forall i\in[0,d],\mathbf{c}(w_i)=i,\mathbf{c}(z_i)=d-i$, and $\mathbf{c}(w_{d+1})=\mathbf{c}(z_{d+1})=0$.
\end{itemize}

Intuitively, there are $d+2$ options in each gadget, linked to the
circumstances of vertices $a,a'$ and $b$:\footnote{Each such option can be seen
to correspond with one of the states that some optimal dynamic programming
algorithm for the problem would assign to vertex $a$: it is either deleted, or
has a number $i\in[0,d]$ of incoming edges within the gadget.} there will have
to be $d$ vertices deleted in total from $X\cup Y$ and the numbers will be
complementary: if $i$ vertices remain in $X$ then, due to $Q$ being of size
$2d+1$ (it is never useful to
delete any of them), there must be $d-i$ vertices remaining in $Y$. Thus, the
$d+1$ options can be seen as represented by the number of vertices remaining in
$X$, while for each one, vertex $b$ must also be deleted due to the OR gadgets
connecting it to $a,a'$.  The extra option is to ignore the actual number of
deletions within $X$ and remove both $a,a'$ instead.

The sets $W,Z$ are connected in such a way that any reasonable feasible
solution will delete all of their vertices, except for a pair $w_i,z_i$ for
some $i\in\{0,\ldots,d+1\}$. The non-deleted pair is meant to encode a choice
for this block gadget.

\begin{figure}[htbp]
\centerline{\includegraphics[width=130mm]{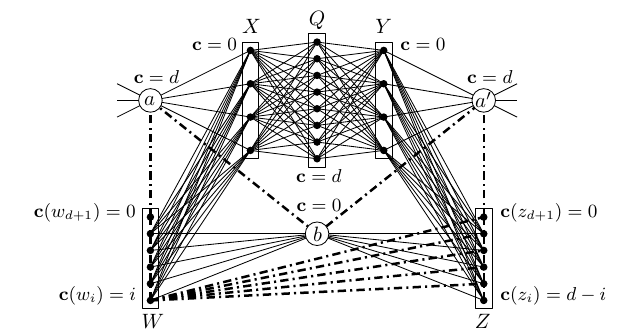}}
\caption{Our block gadget $\hat{B}$. Capacities are shown next to vertices/sets, the OR-connections within $W,Z$ are shown as paths, while the OR-connections between $W,Z$ are only shown for $w_0$.}
\label{fig:seth_orientable_tw_block}
\end{figure}

\subparagraph*{Global construction.} Fix some integer $d$, and suppose that for
some $\epsilon>0$ there exists a $O^*((d+2-\epsilon)^{\tw})$ algorithm for \dOr. We
give a reduction which, starting from any \textsc{SAT} instance with $n$
variables and $m$ clauses, produces an instance of \dOr, such that applying
this supposed algorithm on the new instance would give a better than $2^n$
algorithm for \textsc{SAT}.

We are faced with the problem that $d+2$ is not a power of $2$, hence we will need to create a correspondence between groups of variables of the \textsc{SAT} instance and groups of block gadgets.
We first choose an integer
$p=\lceil\frac{1}{(1-\lambda)\log_2(d+2)}\rceil$, for $\lambda=\log_{d+2}(d+2-\epsilon)<1$. We then group
the variables of $\phi$ into $t=\lceil\frac{n}{\gamma}\rceil$ groups
$F_1,\dots,F_t$, where $\gamma=\lfloor\log_2(d+2)^p\rfloor$ is the maximum
size of each group. Our construction then proceeds as follows (see Figure \ref{fig:seth_orientable_tw_global}):
\begin{enumerate}
 \item Make a group of $p$ block gadgets $\hat{B}_{\tau}^{1,\pi}$ for $\pi\in[1,p]$, for each group $F_{\tau}$ of variables of $\phi$ with $\tau\in[1,t]$.
 \item Make a clique $U_{\tau}^1\coloneqq\{u_{\tau}^{1,1},\dots,u_{\tau}^{1,(d+2)^p}\}$ on $(d+2)^p$ vertices, whose capacities are all set to 0, for each group $F_{\tau}$ of variables of $\phi$ with $\tau\in[1,t]$.
 \item For each $\tau\in[1,t]$, associate each of these $(d+2)^p$ vertices from $U_{\tau}^1$ with one of the $d+2$ options for deletion \emph{for each} gadget $\hat{B}_{\tau}^{1,\pi}$, i.e.\ each vertex of $U_{\tau}^1$ corresponds to a set containing $d+2$ vertices from each gadget $\hat{B}_{\tau}^{1,\pi}$ for $\pi\in[1,p]$.
 \item Connect each $u_{\tau}^{1,i}$ for $i\in[1,(d+2)^p]$ to each vertex from each $W$ and $Z$ within each of the $p$ gadgets $\hat{B}$ that \emph{do not match} the option associated with $u_{\tau}^{1,i}$ via OR gadgets (see Figure \ref{fig:seth_orientable_tw_assignment} for an example).
 \item Make $m(tpd+2)$ copies of this first ``column'' of gadgets.
 \item Identify each vertex $a'$ in $\hat{B}_{\tau}^{l,\pi}$ with the vertex $a$ of its following gadget $\hat{B}_{\tau}^{l+1,\pi}$, i.e.\ for fixed $\tau\in[1,t]$ and $\pi\in[1,p]$, all block gadgets are connected in a path-like manner.
 \item For every clause $C_{\mu}$, with $\mu\in[1,m]$, make a clause gadget $\hat{C}_{\mu}^h$ with $N=q_{\mu}$ inputs,
 where $q_{\mu}$ is the number of literals\footnote{We assume that $q_{\mu}$ is always even, by duplicating some literals if necessary.} in clause $C_{\mu}$ and for each $h\in[0,tpd+1]$.
 \item For every $\tau\in[1,t]$, associate one of the $(d+2)^p$ vertices of $U_{\tau}^l$ (that is in turn associated with one of $d+2$ options for each of the $p$ block gadgets of group $F_{\tau}$), with an assignment to the variables in group $F_{\tau}$. In this way, an assignment to the variables in group $F_{\tau}$ corresponds to a set of $(d+2)^p$ vertices from $p$ block gadgets $\hat{B}_{\tau}^{l,\pi}$ ($d+2$ from each one), as well as a vertex of $U_{\tau}^l$. Note that as there are at most $2^{\gamma}=2^{\lfloor\log_2(d+2)^p\rfloor}$ assignments to the variables in $F_{\tau}$ and $(d+2)^p\ge2^{\gamma}$ such vertices, the association can be unique for each $\tau$ (and the same for all $l\in[1,m(tpd+2)]$).
 \item Each of the clause gadget's $q_{\mu}$ inputs will correspond to a literal appearing in clause $C_{\mu}$.
 \item Connect via OR gadgets each input from each $\hat{C}_{\mu}^h$, corresponding to a literal whose variable appears in group $F_{\tau}$, to the all vertices from the set $U_{\tau}^{mh+\mu}$ (in its appropriate column) whose associated assignments \emph{do not satisfy} the input's literal.
\end{enumerate}

\begin{figure}[htbp]
\centerline{\includegraphics[width=140mm]{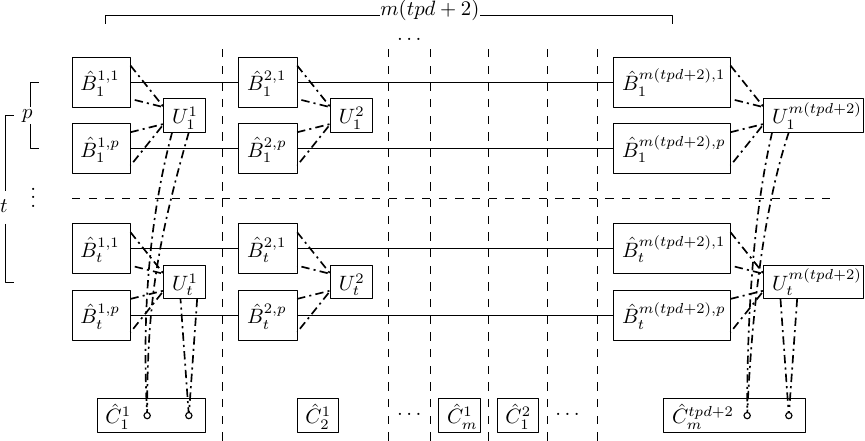}}
\caption{A simplified picture of the complete construction.}
\label{fig:seth_orientable_tw_global}
\end{figure}

Intuitively, our block gadgets $\hat{B}$ connected on long paths are responsible for maintaining a consistent selection of vertices over each line that corresponds to an assignment to the variables of each group. These selections are then reflected on the situation of sets $U$, that are in turn responsible for passing these selections in each column to the clause gadgets $\hat{C}$ to ensure each clause is satisfied. Due to the possible shift in selections/assignments over each line, this arrangement is repeated $tpd+2$ times to ensure that at least one of the repetitions of consecutive columns (one for each clause) will contain a consistent assignment whose satisfiability we can check.

\begin{lemma}\label{lem_seth_tw_fwd}
 If $\phi$ has a satisfying assignment, then there exists $K\subseteq V(G)$ such that $G\setminus K$ is $d$-orientable, with $k=|K|=(tpd+2)\sum_{\mu=1}^{m}(2q_{\mu}+t(3p(d+1)+(d+2)^p-1))$.
\end{lemma}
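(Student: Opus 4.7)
The plan is to translate the satisfying assignment $\sigma$ of $\phi$ into a deletion set $K$ and a compatible $d$-orientation of $G\setminus K$, then verify $|K|=k$. For each variable group $F_\tau$, the restriction $\sigma|_{F_\tau}$ is one of at most $2^\gamma\le(d+2)^p$ partial assignments, and by construction it is associated with a unique vertex $u^*_\tau\in U_\tau^l$ in every column $l\in[1,m(tpd+2)]$. Since this association is identical across columns, $u^*_\tau$ prescribes the same option $o^\pi_\tau\in\{0,1,\ldots,d,*\}$ for every block gadget $\hat{B}_\tau^{l,\pi}$ along the chain for fixed $(\tau,\pi)$. I would take $K$ to consist of: (i) for each column $l$ and group $\tau$, the $(d+2)^p-1$ vertices of $U_\tau^l$ other than $u^*_\tau$; (ii) within each block gadget, the $3(d+1)$ vertices prescribed by its option---namely $b$ together with $d$ vertices of $X\cup Y$ (keeping $i$ of $X$ and $d-i$ of $Y$) and all of $W\cup Z$ except the pair $(w_i,z_i)$, when the option is $i\in\{0,\ldots,d\}$, and the analogous deletions for option $*$ with the shared vertex $a$ replacing $b$ and the pair $(w_{d+1},z_{d+1})$ retained; and (iii) for each clause gadget $\hat{C}_\mu^o$, the $2q_\mu$ vertices from Lemma \ref{clause_gadget} selected so as to preserve one input whose literal is satisfied by $\sigma$. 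Summing over all $t$ groups, $p$ chains and $m(tpd+2)$ columns then yields the claimed value of $k$.

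Next I would build the orientation. Inside each block gadget using option $i\in\{0,\ldots,d\}$, every kept vertex of $X\cup Y$ has capacity $0$, so every surviving edge incident to it must be oriented outward. This exactly fills the capacities of the $2d+1$ vertices of $Q$ (each receiving $d$ in-edges), of $w_i$ ($i$ in-edges) and of $z_i$ ($d-i$ in-edges), while delivering $i$ in-edges to $a$ and $d-i$ to $a'$ from within the gadget. Because every chain uses the same option across columns, the shared vertex identified as $a'$ of $\hat{B}_\tau^{l,\pi}$ and $a$ of $\hat{B}_\tau^{l+1,\pi}$ receives $(d-i)+i=d$ in-edges in total, exactly matching its capacity; the two endpoint vertices of the chain receive only the one-sided contribution of at most $d$. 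For every OR gadget, the construction ensures at least one endpoint lies in $K$: inside block gadgets this is $b$ (or the shared chain vertex $a$ for option $*$); the OR-clique-minus-matching in $W\cup Z$ is handled because all but the chosen pair is deleted; the OR gadgets connecting $u^*_\tau$ to non-matching vertices of $W\cup Z$ terminate on deleted vertices; and the OR gadgets between clause inputs and non-satisfying $U$-vertices have the deleted $U$-vertex endpoint in $K$, while the preserved input of each clause (chosen to be satisfied by $\sigma|_{F_\tau}$) is not OR-connected to the surviving $u^*_\tau$. With one endpoint deleted, each of the $2d+2$ internal vertices of any OR gadget can be oriented to absorb its remaining edge, respecting its capacity $1$ and contributing no in-edges to the surviving endpoint. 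The clause gadgets themselves have all capacities $0$ and are handled directly via Lemma \ref{clause_gadget}.

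The main obstacle I anticipate is verifying that the chain vertices $a=a'$ are never pushed above in-degree $d$ when they remain in $G\setminus K$: this hinges on the observation that the option is constant along each chain, so the two block-gadget contributions $i$ and $d-i$ sum to exactly $d$. A related subtlety is to confirm that OR gadget orientations never inject additional in-edges into already-saturated vertices such as $a$, $a'$, $w_i$, $z_i$, or the capacity-$0$ vertex $u^*_\tau$, which requires checking that every OR gadget adjacent to such a saturated endpoint has its other endpoint in $K$.
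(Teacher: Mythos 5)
Your proposal is correct and follows essentially the same route as the paper's own proof: the same deletion set (all non-matching $U$-vertices, the $3(d+1)$ option-prescribed vertices per block gadget, and $2q_\mu$ clause-gadget vertices chosen to preserve a satisfied input), the same count, and the same orientation in which capacity-$0$ vertices push all edges outward so that $Q$, the retained $(w_i,z_i)$ pair, and the shared chain vertices are filled exactly to capacity, with OR gadgets absorbed by their internal degree-$1$ vertices. The two subtleties you flag at the end (constancy of the option along a chain making the two contributions sum to $d$, and checking that every OR gadget at a saturated vertex has its other endpoint deleted) are precisely the points the paper's proof also relies on.
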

\begin{proof}
 Given a satisfying assignment for $\phi$ we will show the existence of a deletion set $K$ in $G$ of size $k$, whose removal from $G$ would leave the graph $d$-orientable. We first identify, for each $\tau\in[1,t]$, the vertex in $U_{\tau}^l$ (being the ``same'' for all $l\in[1,m(tpd+2)]$) associated with the partial assignment to the group $F_{\tau}$ of variables and include in $K$ all other vertices of $U_{\tau}^l$, for each $\tau,l$. Further, in each block gadget $\hat{B}$, we identify the corresponding option for deletion and include in $K$ all such vertices: vertex $a$ or $b$ and $d$ vertices in total from $X\cup Y$ (that correspond to the vertex deleted from $U$, for all $p$ block gadgets this $U$ is connected to), as well as the $2(d+1)$ vertices from sets $W,Z$ that do not match this option. Finally, we include in $K$ two vertices per triangle from each clause gadget $\hat{C}$ that do not match the above selections from the sets $U$: as
 this is a satisfying assignment, there will be at least one literal that is satisfied in each clause and the set $U$ of the group $F_{\tau}$ that the literal's variable appears in will contain only one vertex that is not already included in $K$, while the clause gadget will have one input vertex that is connected to the \emph{other} vertices of this $U$ (via OR gadgets), i.e.\ the vertices whose associated assignments do not satisfy the literal. We thus include all other inputs in $K$, along with the other vertex per triangle whose removal would leave an independent set within $\hat{C}$ (by Claim~14.39 in \cite{CyganFKLMPPS15}, we can always find such an independent set).
 
 Concerning the size of $K$, we have $3(d+1)$ vertices from each block gadget $\hat{B}$, $t(3p(d+1)+(d+2)^p-1)$ vertices from all $t$ $p$-groups of block gadgets, $2q_{\mu}+t(3p(d+1)+(d+2)^p-1)$ for each column, giving for each $m$-section of columns $\sum_{\mu=1}^{m}(2q_{\mu}+t(3p(d+1)+(d+2)^p-1))$ (for one cycle through all clauses), that finally gives for all $tpd+2$ $m$-sections a total of $k=(tpd+2)\sum_{\mu=1}^{m}(2q_{\mu}+t(3p(d+1)+(d+2)^p-1))$. What remains is to show that $G\setminus K$ is indeed $d$-orientable.
 
 In each block gadget $\hat{B}$, if $b\in K$, we orient all remaining edges from $X$ to $a$ and from $Y$ to $a'$ (the following gadget's $a$). As the capacity of $a$ vertices is $d$ and there are $i$ vertices deleted from each $X$ and $d-i$ from each $Y$ (on the same line), this orientation does not exceed the limit of $d$ for any vertex $a$. Next, we orient all remaining edges from $X$ and $Y$ towards the vertices of $Q$, whose capacities of $d$ are again sufficient. Since the vertices remaining in $W,Z$ (not $w_{d+1},z_{d+1}$, as $b\in K$) match the number of deletions within $X,Y$, we can safely orient all edges from $X,Y$ towards the two remaining vertices in $W,Z$, whose capacities will be exactly sufficient (observe that $b\in K$ implies all edges from $b$ to $W,Z$ are also deleted). Further, all OR gadgets are satisfied, as at least one of their endpoints are in $K$: we only retain one vertex from each $W$ and $Z$ (that are not connected via an OR gadget as they match the \emph{right} deletions from $X,Y$), while vertices $b,w_{d+1},z_{d+1}\in K$. On the other hand, if $a\in K$, then $a'\in K$ as well and the vertices remaining in $W,Z$ are $w_{d+1},z_{d+1}$, whose OR gadgets to $a,a'$ are satisfied, while there is no edge between them and $b$. The division of deletions within $X,Y$ is now irrelevant but their number must still add up to $d$, thus we can again orient all edges from $X,Y$ towards $Q$, while there are no remaining edges between $X,Y$ and $W,Z$.
 
 Concerning the connections between block and clause gadgets, the OR gadgets are again satisfied as $K$ includes at least one of their endpoints and we can safely orient all edges from the other endpoint towards the gadget's inner vertices: there is only one vertex remaining from each $W,Z$ and these complement the only vertex remaining from their connected set $U$, that in turn complements the remaining input from the column's clause gadget $\hat{C}$.
 Within each $\hat{C}$, the remaining 0-capacity vertices form an independent set.
 For the capacity cliques attached to each vertex, we orient all internal edges such that the vertices of the clique all have $d$ incoming edges each and the edges towards the vertex the clique is attached to are oriented towards it, thus fixing its capacity, that is externally satisfied as described above (see also Lemma \ref{lem:saturation}).
\end{proof}

\begin{lemma}\label{lem_seth_tw_bwd}
 If there exists $K\subseteq V(G)$ such that $G\setminus K$ is $d$-orientable, with $k=|K|=(tpd+2)\sum_{\mu=1}^{m}(2q_{\mu}+t(3p(d+1)+(d+2)^p-1))$, then $\phi$ has a satisfying assignment.
\end{lemma}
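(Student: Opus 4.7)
The plan is to reverse-engineer a satisfying assignment of $\phi$ from the solution $K$. First I would establish tight lower bounds on the number of deletions that $K$ must make in each component of $G$. Since $U_\tau^l$ is a clique on $(d+2)^p$ vertices of capacity $0$, it retains at most one vertex and thus contributes at least $(d+2)^p-1$ deletions. By Lemma~\ref{clause_gadget}, each clause gadget $\hat{C}_\mu^o$ contributes at least $2q_\mu$ deletions. For each block gadget, Lemma~\ref{or_gadget} together with the OR-clique-minus-matching structure of $W\cup Z$ forces at least $2(d+1)$ deletions there (otherwise some OR-connected non-matching pair would both survive); the capacity-$d$ independent set $Q$ of size $2d+1$, all of whose neighbors in $X\cup Y$ have capacity $0$, forces at least $d$ deletions in $X\cup Y$; and the OR gadgets among $a,a',b$ force at least one more deletion in $\{a,a',b\}$, giving a per-block lower bound of $3(d+1)$. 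Summing reproduces exactly the value $k$, so every bound must be met with equality.

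In particular, each $U_\tau^l$ retains exactly one vertex, each block retains a single matching pair $(w_i,z_i)\subseteq W\cup Z$, and each block realizes one of the $d+2$ options described in the construction (either $b$ is deleted and $i\in\{0,\ldots,d\}$ vertices remain in $X$ and $d-i$ in $Y$, or both $a$ and $a'$ are deleted). The OR gadgets linking the surviving vertex of $U_\tau^l$ to the non-matching vertices of $W\cup Z$ in each of the $p$ block gadgets of column $l$ force the option realized in each of those blocks to coincide with the encoding carried by the surviving $U$-vertex, so we obtain for each group $\tau$ and column $l$ a well-defined partial assignment $\sigma^l_\tau$ on the variables of $F_\tau$.

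The main obstacle is proving global consistency of the $\sigma^l_\tau$'s across columns. To this end I would examine the shared vertex $a=a'$ between two consecutive blocks $\hat{B}_\tau^{l,\pi}$ and $\hat{B}_\tau^{l+1,\pi}$ along each of the $tp$ block-gadget paths. If both blocks choose normal options $i_l,i_{l+1}\in\{0,\ldots,d\}$, this vertex absorbs $d-i_l$ incoming arcs from the surviving $Y$-vertices of block $l$ and $i_{l+1}$ incoming arcs from the surviving $X$-vertices of block $l+1$, so $(d-i_l)+i_{l+1}\le d$, giving $i_{l+1}\le i_l$. Moreover, a transition between a normal option (which keeps $a,a'$) and the special ``delete $a,a'$'' option is impossible, since the shared vertex cannot simultaneously be kept and deleted. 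Hence each path is either constantly the ``delete $a,a'$'' option or has non-increasing normal options, yielding at most $d$ strict decreases per path and at most $tpd$ in total across all $tp$ paths.

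Since the $m(tpd+2)$ columns are partitioned into $tpd+2$ consecutive sections of $m$ columns each, and the number of within-section column transitions that witness a strict option decrease is at most $tpd$, pigeonhole gives at least one ``clean'' section $o^\ast$ in which $\sigma^l_\tau$ is the same for every $l$ in the section and every $\tau$; call this common assignment $\sigma$. For every clause $C_\mu$, Lemma~\ref{clause_gadget} guarantees a surviving input vertex in $\hat{C}_\mu^{o^\ast}$ (which lies in column $mo^\ast+\mu$); by construction this input is OR-connected to exactly those vertices of $U_\tau^{mo^\ast+\mu}$ (for the appropriate $\tau$) whose encoded assignments do \emph{not} satisfy the corresponding literal, so all such vertices are deleted. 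The unique surviving vertex of that $U$-clique therefore encodes an assignment satisfying the literal, which means $\sigma$ satisfies that literal and hence the clause $C_\mu$. As $\mu$ was arbitrary, $\sigma$ is a satisfying assignment of $\phi$.
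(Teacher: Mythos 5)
Your proof is correct and follows essentially the same route as the paper's: tight budget counting that forces the structure of $K$ (one survivor per $U$-clique, one matching pair per $W\cup Z$, $d$ deletions in $X\cup Y$, $2q_\mu$ per clause gadget), within-column consistency via the OR gadgets, monotonicity of block options along each row through the shared $a=a'$ vertices, pigeonhole over the $tpd+2$ sections, and extraction of the assignment from a clean section. Your observation that normal-to-special transitions are outright impossible (so each row is uniformly one type and at most $tpd$ shifts occur in total) is a slightly sharper count than the paper's $tp(d+1)$, and it makes the pigeonhole arithmetic line up cleanly with the $tpd+2$ sections.
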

\begin{proof}
 Given a deletion set $K$ of size $k$ such that $G\setminus K$ is $d$-orientable, we will show the existence of a satisfying assignment for $\phi$. To identify the necessary structure of $K$, we first observe that the number of $m$-sections of columns (corresponding to a complete cycle over all clauses) being $tpd+2$, it suffices to show that for each $m$-section, any deletion set $K$ will have to include $\sum_{\mu=1}^{m}(2q_{\mu}+t(3p(d+1)+(d+2)^p-1))$ vertices, meaning that in each column the number of vertices included in $K$ must be $2q_{\mu}+t(3p(d+1)+(d+2)^p-1)$, by showing that $G\setminus K$ would not be $d$-orientable otherwise.
 
 By Lemma \ref{clause_gadget}, there will have to be at least $2q_{\mu}$ vertices included in $K$ from every $\hat{C}$, while this cannot include all its inputs. This leaves $t(3p(d+1)+(d+2)^p-1)$ vertices for all $t$ groups and we again aim to show that each group will require the deletion of at least $3p(d+1)+(d+2)^p-1$ vertices.
 
 As each set $U$ is a clique on $(d+2)^p$ vertices of capacity 0, at least all but one of them will have to be included in $K$, leaving $3p(d+1)$ vertices for the $p$ block gadgets of each group, with $3(d+1)$ required from each $\hat{B}$: sets $W,Z$ are OR-cliques, meaning that at least all but one vertex can be retained from each, thus accounting for the $2(d+1)$ vertices, while each vertex in set $Q$ is of capacity $d$ with $2d$ edges from $X,Y$. As $|Q|=2d+1$, set $K$ cannot include all of them and deleting fewer vertices from $Q$ would not suffice to orient the remaining edges. Thus we know there must be at least $d$ vertices from sets $X,Y$ included in $K$, while for the final vertex from $\hat{B}$ we note that as $a$ and $b$ are connected via an OR gadget, at least one of them will also have to be in $K$. This gives the correct number of vertices included in $K$ from each part of $G$.
 
 If in each $\hat{C}$ there is one input that is not included in $K$, then all vertices in the sets $U$ of this column that it is connected to (via OR gadgets) must be in $K$. This means the single vertices remaining in sets $U$ in each column must match the inputs remaining in gadgets $\hat{C}$, as every input is connected to all vertices associated with some partial assignment that \emph{does not} satisfy its corresponding literal. Also, each vertex from each set $U$ is connected to all but one vertex of each of the $p$ pairs of sets $W,Z$ in the group of block gadgets $\hat{B}$ of this set $U$ (again via OR gadgets). Thus the vertices remaining within each $W,Z$ must match the vertex retained from their connected set $U$. Then in each $\hat{B}$, due to the OR gadgets between non-matching vertices of $W,Z$, we know the vertices retained from $W$ and $Z$ must agree, i.e.\ they must be of the same index $i$. For $i\not=d+1$, this implies $b\in K$ (due to the edges from $b$ to both $w_{d+1},z_{d+1}$) and further, that the number of vertices in $K$ from $X$ is $i$, while the number of vertices in $K$ from $Y$ is $d-i$ (due to the capacities of $w_{d+1},z_{d+1}$). For $i=d+1$, we must have $a\in K$ and also $a'\in K$. Thus the number of deletions within sets $X$ and $Y$ of each block gadget $\hat{B}$, along with deletion of either $a$ or $b$ (i.e.\ the options within $\hat{B}$) must match the vertices remaining in sets $W,Z$, that in turn must match (complement) the vertex remaining in set $U$ (over all $p$ block gadgets of that group), that must finally also match (complement) the input remaining in that column's clause gadget $\hat{C}$.
 
 Next, we require that there exists at least one $h\in[0,tpd+1]$ for every $\tau\in[1,t]$ for which the selected options within all $p$ block gadgets $\hat{B}_{\tau}^{mh+\mu,\pi}$ (and thus also the corresponding retained vertices from sets $U$) do not change for all $\mu\in[1,m]$, i.e.\ that there exists some $m$-section ($m$ successive columns) for which the selected options remain unchanged (over all rows). Observe that, if $i$ vertices are included in $K$ from a set $X$ and $d-i$ from $Y$ in some gadget $\hat{B}_{\tau}^{l,\pi}$ and $a,a'\notin K$, then $a'$ must have $i$ incoming edges from $Y$, which implies it can have at most $d-i$ incoming edges from the set $X$ of the following gadget $\hat{B}_{\tau}^{l+1,\pi}$ (inside which it is the vertex $a$). If, on the other hand, $b\notin K$ in some gadget $\hat{B}_{\tau}^{l,\pi}$, then both $a,a'$ must be in $K$ due to the OR gadgets between $b$ and $a,a'$ and the budget ($3(d+1)$) for the following gadget $\hat{B}_{\tau}^{l+1,\pi}$ implies that $b\notin K$ there as well. This means the options for deletion over a row of gadgets (fixed $\tau\in[1,t],\pi\in[1,p]$) are \emph{monotone}: if the selected option in some gadget $\hat{B}_{\tau}^{l,\pi}$ corresponds to deleting $i$ vertices from $X$, then the selected option in the following gadget $\hat{B}_{\tau}^{l+1,\pi}$ must correspond to deleting $i'$ vertices from $X$, with $i'\ge i$, while if vertex $b$ is retained in some gadget (even if it was deleted in its predecessor) then it must also be retained in its follower. Thus this ``shift'' in options can happen at most $d+1$ times for each $\pi\in[1,p]$, giving $p(d+1)$ times for each $\tau\in[1,t]$, or $tp(d+1)$ times over all $\tau$. By the pigeonhole principle, there must thus exist an $h\in[0,tp(d+1)]$ such that no shift happens among the gadgets $\hat{B}_{\tau}^{mh+\mu,\pi},\forall\tau\in[1,t],\pi\in[1,p],\mu\in[1,m]$.
 
 Our assignment for $\phi$ is then given by the selected options within each gadget $\hat{B}_{\tau}^{mh+1}$ for this $h$: for every group $F_{\tau}$ we consider the single remaining vertex in $U_{\tau}^{mh+1}$ that is associated with a partial assignment for the variables in $F_{\tau}$. As there must be exactly one remaining vertex in each set $U$ that matches both the selected options within all $p$ gadgets of its group, as well as the remaining input within that column's clause gadget $\hat{C}$, these selected options being unchanged over the complete $m$-section, this assignment also satisfies each clause $C_{\mu}$ for all $\mu\in[1,m]$. 
\end{proof}

\begin{lemma}\label{lem_seth_tw_bound}
 Graph $G$ has treewidth $\tw(G)\le tp+f(d,\epsilon)$, for $f(d,\epsilon)=O(d^p)$.
\end{lemma}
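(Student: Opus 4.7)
The plan is to build an explicit path decomposition of $G$ of width at most $tp + O(d^p)$, and then appeal to $\tw\le\pw$ from Lemma~\ref{lem:widths}. First I will invoke Lemma~\ref{or_gadget2} (which costs only an additive $1$) and Lemma~\ref{lem:saturation} (which costs only an additive $O(d)$, absorbed into the final $O(d^p)$) to reduce to bounding the pathwidth of the OR-contracted, uncapacitated graph $G'$ in which each OR gadget has been replaced by a direct edge between its two endpoints. The decomposition of $G'$ will sweep the grid column by column and maintain between adjacent columns only the $tp$ \emph{interface vertices}, namely the $a/a'$ vertices shared by consecutive block gadgets, one per pair $(\tau,\pi)$ with $\tau\in[1,t]$ and $\pi\in[1,p]$.

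Within a fixed column every local vertex must be introduced; these split naturally into $t$ \emph{local groups} and one clause gadget $\hat{C}_\mu^o$. A local group consists of the $p$ block gadgets of some $F_\tau$ in the current column ($O(pd)$ vertices) together with the clique $U_\tau^l$ of size $(d+2)^p$, so $O(d^p)$ vertices in total. Any local group whose $U$-set is never touched by any input of the clause gadget will simply be processed in isolation: introduce all its local vertices, process all edges internal to the group and to the interface, then forget everything except the interface. This already keeps the bag at $tp+O(d^p)$.

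The real obstacle is merging the clause gadget with the local groups its inputs hit: there may be up to $\min(q_\mu,t)$ such groups and I cannot afford to keep more than one $U$-set in the bag at once. The key maneuver is to use our freedom in constructing the clause gadget to reorder its $q_\mu$ triangles so that triangles whose input literals belong to the same group $F_\tau$ appear consecutively. I then process the clause gadget using its natural pathwidth-$3$ decomposition (Lemma~\ref{clause_gadget}); while the block of triangles whose inputs come from $F_\tau$ is active I simultaneously hold $U_\tau^l$ together with the $p$ block gadgets of group $F_\tau$ in column $l$ in the bag, handle all edges among them and all edges from these inputs to $U_\tau^l$, and forget both as soon as processing moves to the block of triangles from a different group.

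A bag during this phase contains the $tp$ interface vertices, at most four clause-gadget vertices (the current triangle together with its overlap to the next), the $O(d^p)$ local vertices of the currently active group, and at most one OR internal at a time (from Lemma~\ref{or_gadget2}). This yields $\pw(G')\le tp+O(d^p)$ and therefore $\tw(G)\le tp+f(d,\epsilon)$ with $f(d,\epsilon)=O(d^p)$. The one bookkeeping detail to check is that every vertex occupies a contiguous range of bags, which is ensured precisely because reordering the clause gadget's triangles by group makes each $U$-set and each group's block gadgets appear in a single contiguous stretch per column.
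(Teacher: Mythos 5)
Your proposal is correct and follows essentially the same route as the paper: the paper bounds the pathwidth via a mixed search strategy (using $\pw\le\ms\le\pw+1$) that keeps $tp$ persistent searchers on the $a/a'$ interface vertices, cleans one group's $U$-set and block gadgets at a time with $O(d^p)$ extra searchers, and likewise orders the clause gadget's inputs by variable group so that only one $U$-set need be held while its triangles are processed. Your direct construction of the path decomposition, with the same interface-plus-one-active-group bag structure and the same appeals to Lemmas~\ref{or_gadget2} and~\ref{lem:saturation} for the OR gadgets and capacity cliques, is just a rephrasing of that argument.
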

\begin{proof}
 We will show a pathwidth bound of $\pw(G)\le tp+O(d^p)$ by providing a mixed search strategy to clean $G$ using at most this many searchers simultaneously. The treewidth bound then follows from Lemma \ref{lem:widths} and the well known relationship between pathwidth and \emph{mixed search number} $\ms$ (see \cite{TAKAHASHI1995253}): \emph{for any graph $G$, it is $\pw(G)\le\ms(G)\le\pw(G)+1$}.
 
 We initially place one searcher on every vertex $a$ of gadgets $\hat{B}_{\tau}^{1,\pi}$ for all $\tau\in[1,t],\pi\in[1,p]$. These account for the $tp$ searchers that we will be moving to the vertices $a$ of the following gadgets on each line after the inner parts of each column have been cleaned, while the remaining searchers will be (re)used to clean each part of each column.
 
 In particular, we assume the input vertices of $\hat{C}_1^1$ are ordered in terms of the group $F_{\tau}$ of variables that their associated literals appear in and place a searcher on each vertex of the set $U_{\tau}^1$ that the first input vertex of $\hat{C}_1^1$ is connected to, along with three searchers on the vertices of the first triangle. Using some of the remaining searchers we first clean all paths between this input and $U_{\tau}^1$. Note that OR gadgets involve $2d+3$ vertices, with only 3 searchers being sufficient (Lemma \ref{or_gadget2}). Concerning the cliques attached to each vertex to set its capacity, see Lemma \ref{lem:saturation}. If the next input is also associated with a variable in group $F_{\tau}$, we slide the searchers on the next triangle. If the next input is associated with a variable appearing in some following group $F_{\tau'}$, we keep the searchers on the same triangle until this block gadget has been cleaned and we reach the group of variables (and block gadgets) $F_{\tau'}$.
  
 To clean a block gadget $\hat{B}_{\tau}^{1,\pi}$, we place a searcher on each vertex of sets $X,Y,W,Z,Q$. These are $<12d$ and we can completely clean the gadget (and all vertices between it and $U_{\tau}^1$) using the remaining searchers. We then remove the searcher from vertex $a$ and place it on $a'$ (to clean the following gadget after the first column has been cleaned). Having thus cleaned $\hat{B}_{\tau}^{1,\pi}$ we remove all searchers from it and place them on the corresponding vertices of $\hat{B}_{\tau}^{1,\pi'}$ and repeat the process. After all $p$ gadgets for $\tau$ have been cleaned in this way, the initial searchers having moved from vertices $a$ to $a'$, we can remove all searchers from $U_{\tau}^1$ and place them on $U_{\tau'}^1$ and repeat the process for $\tau'$. Performing the above for all $\tau\in[1,t]$ cleans the first column, along with the first clause gadget and we can then remove the three searchers from $\hat{C}_1^1$ and place them on the first triangle of $\hat{C}_2^1$ (see also Lemma \ref{clause_gadget}).
 
 Repeating the above process for all $m(tpd+2)$ columns completely cleans the graph and the number of searchers we have used simultaneously is $\le tp+O(d^p)$: our $tp$ initial searchers that we move from vertices $a$ to $a'$, three searchers for each clause gadget, $(d+2)^p$ searchers for each set $U$, at most $12d$ searchers for each block gadget $\hat{B}$ and at most $4d$ searchers for the smaller gadgets and remaining intermediate paths/edges.
\end{proof}

\begin{theorem}\label{thm_seth_tw}
 For any fixed $d\ge1$, if \textsc{$d$-Orientable Deletion} can be solved in $O^*((d+2-\epsilon)^{\tw(G)})$ time for some $\epsilon>0$, then there exists some $\delta>0$, such that \textsc{SAT} can be solved in $O^*((2-\delta)^n)$ time.
\end{theorem}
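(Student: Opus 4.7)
The plan is to chain together Lemmas \ref{lem_seth_tw_fwd}, \ref{lem_seth_tw_bwd}, and \ref{lem_seth_tw_bound}, then perform a short exponential calculation. Starting from a \textsc{SAT} instance $\phi$ on $n$ variables and $m$ clauses, the construction of the previous pages produces, in polynomial time, a graph $G$; by Lemmas \ref{lem_seth_tw_fwd} and \ref{lem_seth_tw_bwd}, $\phi$ is satisfiable if and only if $G$ admits a deletion set of the prescribed size $k$ leaving a subgraph that respects the capacities. Invoking Lemma \ref{lem:saturation} converts this equivalent \cdOr\ instance into a plain \dOr\ instance with the same value of $d$ and treewidth inflated by only an additive $O(d)$ term, so the reduction outputs a genuine \dOr\ instance to which the hypothetical algorithm applies.

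By Lemma \ref{lem_seth_tw_bound}, $\tw(G) \le tp + f(d,\epsilon)$ with $f(d,\epsilon) = O(d^p)$ depending only on $d$ and $\epsilon$, not on $n$. Running the hypothetical $(d+2-\epsilon)^{\tw(G)} n^{O(1)}$ algorithm on $G$ therefore takes time $O^*((d+2-\epsilon)^{tp})$, as the $f(d,\epsilon)$ term contributes only an $n$-independent constant factor, absorbed into $O^*(\cdot)$. Writing $\lambda = \log_{d+2}(d+2-\epsilon) < 1$, this rewrites as $O^*(2^{\lambda tp\log_2(d+2)})$. Plugging in $t = \lceil n/\gamma\rceil$ and $\gamma = \lfloor p\log_2(d+2)\rfloor$ bounds the exponent by $\lambda n \cdot (p\log_2(d+2)/\gamma) + O(p\log_2(d+2))$, where the additive term comes from the ceiling in the definition of $t$ and depends only on $d$ and $\epsilon$.

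The crucial final step is to verify that the choice $p = \lceil 1/((1-\lambda)\log_2(d+2))\rceil$ is tailored precisely so that $(1-\lambda)p\log_2(d+2) \ge 1$, which rearranges to $\lambda p\log_2(d+2) \le p\log_2(d+2) - 1 \le \gamma$. Hence the ratio $\lambda p\log_2(d+2)/\gamma$ is bounded away from $1$ by a constant $\delta' > 0$ depending only on $d$ and $\epsilon$ (in the worst case where the ceiling is tight, the slack is $1/(p\log_2(d+2))$, still a positive constant in $n$). Consequently the exponent of $2$ is at most $(1-\delta')n + O(1)$, giving an $O^*((2-\delta)^n)$ algorithm for \textsc{SAT} for some $\delta = \delta(d,\epsilon) > 0$, contradicting the stated corollary of SETH.

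The main subtlety, and the only place that requires genuine care, is this arithmetic verification that the constant gap survives the floor/ceiling rounding and the lower-order additive contributions from $f(d,\epsilon)$ and from the ceiling in $t$. Since $p$ depends only on $d$ and $\epsilon$, both lower-order terms are absorbed into the hidden polynomial factor of $O^*(\cdot)$ without touching the $n$-dependent part of the exponent; all correctness and width content has already been packaged in Lemmas \ref{lem_seth_tw_fwd}, \ref{lem_seth_tw_bwd}, and \ref{lem_seth_tw_bound}.
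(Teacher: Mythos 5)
Your proposal is correct and follows essentially the same route as the paper: correctness via Lemmas \ref{lem_seth_tw_fwd} and \ref{lem_seth_tw_bwd}, the width bound from Lemma \ref{lem_seth_tw_bound}, and the same arithmetic verification that the choice $p=\lceil 1/((1-\lambda)\log_2(d+2))\rceil$ forces $\lambda p\log_2(d+2)\le p\log_2(d+2)-1<\gamma$, so the exponent is $(1-\delta')n+O(1)$. Your explicit invocation of Lemma \ref{lem:saturation} to pass from the capacitated instance to plain \dOr\ is a detail the paper leaves implicit, and is a welcome addition.
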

\begin{proof}
 Assuming the existence of some algorithm of running time $O^*((d+2-\epsilon)^{\tw(G)})=O^*((d+2)^{\lambda\tw(G)})$ for \textsc{$d$-Orientable Deletion}, where $\lambda=\log_{d+2}(d+2-\epsilon)$, and given a formula $\phi$ of \textsc{SAT}, we construct an instance of \textsc{$d$-Orientable Deletion} using the above construction and then solve the problem using the $O^*((d+2-\epsilon)^{\tw(G)})$-time algorithm. Correctness is given by Lemma \ref{lem_seth_tw_fwd} and Lemma \ref{lem_seth_tw_bwd}, while Lemma \ref{lem_seth_tw_bound} gives the upper bound on the running time:
 
 \begin{align}
  O^*((d+2)^{\lambda\tw(G)})&\le O^*\left((d+2)^{\lambda(tp+f(d,\epsilon))}\right)\\
  &\le O^*\left((d+2)^{\lambda p\left\lceil\dfrac{n}{\lfloor\log_2(d+2)^p\rfloor}\right\rceil}\right)\label{tw_comp_2}\\
  &\le O^*\left((d+2)^{\lambda p\dfrac{n}{\lfloor\log_2(d+2)^p\rfloor}+\lambda p}\right)\\
  &\le O^*\left((d+2)^{\lambda\dfrac{np}{\lfloor p\log_{2}(d+2)\rfloor}}\right)\label{tw_comp_4}\\
  &\le O^*\left((d+2)^{\delta'\dfrac{n}{\log_2(d+2)}}\right)\label{tw_comp_5}\\
  &\le O^*(2^{\delta''n})=O^*((2-\delta)^n)
 \end{align}
for some $\delta,\delta',\delta''<1$. Observe that in line (\ref{tw_comp_2}) the function $f(d,\epsilon)$ is considered constant, as is $\lambda p$ in line (\ref{tw_comp_4}), while in line (\ref{tw_comp_5}) we used the fact that there always exists a $\delta'<1$ such that $\lambda\dfrac{p}{\lfloor p\log_2(d+2)\rfloor}=\dfrac{\delta'}{\log_2(d+2)}$, as we have:
\begin{equation*}
 \begin{split}
    p\log_2(d+2)-1&<\lfloor p\log_2(d+2)\rfloor\\
    \Leftrightarrow\dfrac{\lambda p\log_2(d+2)}{p\log_2(d+2)-1}&>\dfrac{\lambda p\log_2(d+2)}{\lfloor p\log_2(d+2)\rfloor},\\
    \text{from which, by substitution, we get } \dfrac{\lambda p\log_2(d+2)}{p\log_2(d+2)-1}&>\delta',\\
    \text{now requiring } \dfrac{\lambda p\log_2(d+2)}{p\log_2(d+2)-1}&\le1,\\
    \text{or } p\ge\dfrac{1}{(1-\lambda)\log_2(d+2)},
 \end{split}
\end{equation*}
that is precisely our definition of $p$. This concludes the proof.
\end{proof}

\begin{corollary}\label{cor_seth_tw}
 If \textsc{Pseudoforest Deletion} can be solved in $O^*((3-\epsilon)^{\tw(G)})$ time for some $\epsilon>0$, then there exists some $\delta>0$, such that \textsc{SAT} can be solved in $O^*((2-\delta)^n)$ time.
\end{corollary}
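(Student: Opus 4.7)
The plan is to obtain this corollary as an immediate specialization of Theorem~\ref{thm_seth_tw} to the case $d=1$. The key observation, already noted in the introduction, is that \textsc{PseudoForest Deletion} is precisely \textsc{$1$-Orientable Deletion}: a graph is a pseudoforest (each connected component contains at most one cycle) if and only if its edges admit an orientation of maximum in-degree at most $1$. Hence any algorithm for \textsc{PseudoForest Deletion} is an algorithm for \textsc{$1$-Orientable Deletion} on the same instance, with the same running time bound in terms of $\tw(G)$.

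Concretely, I would first recall the pseudoforest/$1$-orientability equivalence in one sentence. Then I would instantiate Theorem~\ref{thm_seth_tw} with $d=1$: the hypothesis $O^*((d+2-\epsilon)^{\tw(G)})$ becomes $O^*((3-\epsilon)^{\tw(G)})$, and the theorem yields a $\delta>0$ such that \textsc{SAT} is solvable in $O^*((2-\delta)^n)$ time. Composing the pseudoforest-to-$1$-orientable identification with this theorem gives the claimed implication.

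The only step that requires any care is verifying that the construction of Theorem~\ref{thm_seth_tw} is valid in the boundary case $d=1$; but an inspection of the block gadget $\hat{B}$, the OR gadgets, and the clause gadget $\hat{C}(N)$ shows that each of them is well-defined for $d=1$ (the sets $X,Y$ have one element, $Q$ has three elements, and $W,Z$ each have three elements, so all gadgets are nondegenerate), and Lemmas~\ref{lem_seth_tw_fwd}, \ref{lem_seth_tw_bwd}, and \ref{lem_seth_tw_bound} apply without modification. Hence no genuine obstacle arises, and the corollary follows by direct substitution.
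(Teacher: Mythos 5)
Your proof is correct and matches the paper's intent exactly: the corollary is the $d=1$ instantiation of Theorem~\ref{thm_seth_tw} (which is stated for all $d\ge 1$), combined with the standard identification of pseudoforests with $1$-orientable graphs. Your sanity check that the gadgets are nondegenerate at $d=1$ is a nice touch but not a new idea beyond what the paper implicitly relies on.
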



\section{Algorithm for Clique-Width}\label{sec:cwalg}

In this section we present a dynamic programming algorithm for \dOr\ parameterized by the clique-width of the input graph, of running time $O^*(d^{O(d\cdot\cw)})$. The algorithm is based on the dynamic programming of \cite{Bodlaender2017} for \textsc{Max $W$-Light}, the problem of assigning a direction to each edge of an undirected graph so that the number of vertices of out-degree at most $W$ is maximized. As noted in \cite{Bodlaender2017} (and our Section \ref{sec:intro}), that problem is supplementary to \textsc{Min ($W+1$)-Heavy}, the problem of minimizing the number of vertices of out-degree at least $W+1$ in terms of exact computation (though their approximability properties may vary), that in turn can be seen as the optimization version of \dOr\ for $d=W$, if we simply consider the in-degree of every vertex instead of the out-degree (by reversing the direction of every edge in any given orientation).

The dynamic programming algorithm of \cite{Bodlaender2017} runs in XP-time $n^{O(d\cdot\cw)}$, by considering the full number of possible states for each label of a clique-width expression $T$ for the input graph $G$:\footnote{Slightly paraphrased here for \dOr, keeping the same notation.} for each node $t$ of $T$, it computes an \emph{in-degree-signature} of $G_t$ (i.e.\ the graph constructed up to the operation of $t$), being a table $A_t=(A_t^{i,j}),\forall i\in[1,\cw],j\in[0,d]$, if there is an orientation $\Lambda_t$ (of every edge of $G_t$) such that for each label $i\in[1,\cw]$ and \emph{in-degree-class} $j\in[0,d]$, the entry $A_t^{i,j}$ is the number of vertices labelled $i$ with in-degree $j$ in $G_t$ under $\Lambda_t$, and also a \emph{deletion set} $K_t$, where $K_t\coloneqq\bigcup_{i\in[1,\cw]}K_t^i$ for each $i\in[1,\cw]$, where $K_t^i$ is the set of vertices labelled $i$ that are deleted from $G_t$. Based on this scheme, the updating process of the tables is straightforward for Leaf, Relabel and Union nodes, while for Join nodes, the computation of the degree signatures is based on a result by \cite{Asahiro2012}, stating that an orientation satisfying any given lower and upper in-degree bounds for each vertex can be computed in $O(m^{3/2}\log n)$ time, where $m$ is the number of edges to be oriented. We refer to \cite{Bodlaender2017} for details.

The aim of this section is to improve the running time of the above algorithm to $O^*(d^{O(d\cdot\cw)})$, that is FPT-time parameterized by $d$ and $\cw$, by showing that not all of the natural states utilized therein are in fact required. The main idea behind this improvement is based on the redundancy of \emph{exactly} keeping track of the size of an in-degree-class above a certain threshold (i.e.\ $d^4$), since the valid $d$-orientations of a biclique created after joining such an in-degree-class with some other label are greatly constrained, as any optimal solution will always orient all new edges towards the vertices of this ``large'' class in order to maintain $d$-orientability (and update in-degree-class sizes accordingly), while respecting the given deletion set and orientation of previously introduced edges.

We formally define the notion of a (partial) \emph{valid solution}: for a node $t$ of the clique-width expression $T$, deletion set $K_t$ and orientation $\Lambda_t$, a (partial) \emph{solution} $X_t\coloneqq\{K_t\subseteq V(G_t),\Lambda_t\}$ is \emph{valid}, if $|K_t|\le k$ and $\delta_t^-(v)\le d,\forall v\in V(G_t)$ under orientation $\Lambda_t$. For two nodes $t_1,t_2$ of some clique-width expression $T$ where $t_2$ is a successor of $t_1$, we say that a solution $Y_{t_2}\coloneqq\{K_{t_2},\Lambda_{t_2}\}$ \emph{extends} a solution $X_{t_1}\coloneqq\{K_{t_1},\Lambda_{t_1}\}$, if $K_{t_1}\subseteq K_{t_2}$ and $\Lambda_{t_2}$ assigns the same directions as $\Lambda_{t_1}$ to all edges in $G_{t_1}$.
Further, we formally define a \emph{large} in-degree-class: for node $t$ and label $i\in[1,\cw]$, an in-degree-class $j\in[0,d]$ is \emph{large}, if its size is larger than $d^4$, or $A_t^{i,j}=|v\in V(G_t^i):\delta_t^-(v)=j|\ge d^4$.

\begin{lemma}\label{cw_algo_lem}
 For some node $t\in T$, label $i\in[1,\cw]$ and large in-degree-class $j\in[0,d]$ of $i$, if there exists a valid \emph{partial} solution $X_t=\{K_t,\Lambda_t\}$, that is extended to a valid \emph{global} solution $X_r=\{K_r,\Lambda_r\}$ for the root $r$ of $T$, where $\Lambda_t$ assigns a direction to some edges $(v,u)$ \emph{away from} vertices $v\in V_t^i$ belonging to the large in-degree-class $j$, then there exists another valid \emph{partial} solution $X'_t=\{K'_t,\Lambda'_t\}$ that is extended to a valid \emph{global} solution $X'_r=\{K'_r,\Lambda'_r\}$, with $|K'_r|\le |K_r|$, where $\Lambda'_t$ assigns direction \emph{towards} vertices $v$ belonging to the large in-degree-class $j$.
\end{lemma}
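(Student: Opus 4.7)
The plan is to prove the statement by an edge-by-edge exchange argument, combining two ingredients: the label-based symmetry of the clique-width expression, and the slack provided by the threshold $d^4$.

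First I would establish the following structural symmetry: any permutation $\pi$ of $V_t^i$ extends to an automorphism of the remaining clique-width operations above $t$, since those operations act only through labels and not on individual vertex identities. Consequently, applying $\pi$ simultaneously to $X_t$ and to its extension $X_r$ (renaming vertex identities) yields another valid pair of solutions with the same deletion-set size. This gives complete freedom to swap the identities of any two vertices in $V_t^i$ without affecting the global validity or cost of the solution.

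The main construction step is a local flip-and-swap. Consider an edge $(v,u) \in E(G_t)$ oriented outward from a vertex $v \in V_t^{i,j}$ under $\Lambda_t$. Reverse its direction to $(u,v)$: this raises the in-degree of $v$ to $j+1$ and lowers the in-degree of $u$ by one, creating only slack at $u$. To keep the signature table $A_t$ unchanged, I would invoke the symmetry observation to swap the identities of $v$ and a vertex $v^* \in V_t^i$ whose in-degree under $\Lambda_t$ is $j+1$; the swap reassigns the just-flipped incoming edge to the re-identified $v^*$ and the former role of $v^*$ to the re-identified $v$, preserving the multiset of in-degrees in $V_t^i$. Iterating this flip-and-swap across every outward edge incident to $V_t^{i,j}$, and applying the cumulative permutation to the extension $X_r$, produces the desired $X'_t$ together with a valid global extension $X'_r$ of the same cost, in which no edge is oriented away from the large class.

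The main obstacle I expect is ensuring that a suitable swap partner $v^*$ always exists during the iteration and that each flip remains compatible with the global extension. The threshold $|V_t^{i,j}| \ge d^4$ is precisely what provides the needed slack: the number of in-degree classes is only $d+1$, and the total number of outward edges that must be reversed is at most $d \cdot |V_t^{i,j}|$, so by pigeonhole there is always a neighboring class large enough to supply fresh partners throughout the process. Moreover, the observation from the preamble of the section — that in any valid extension the orientations at join operations involving a large class are forced inward by Lemma~\ref{biclique_nonorientable} — guarantees that the canonicalized $\Lambda'_t$ is automatically consistent with every valid extension of $X_r$, so the deletion set does not grow. Carrying out this counting rigorously, and in particular showing that the iteration terminates without ever exhausting the pool of swap partners, is the delicate technical heart of the argument.
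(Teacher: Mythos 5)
Your overall strategy --- flip the outward edges of the large class inward and use the fact that operations above $t$ act only on labels to argue that the modified solution still extends --- is the same as the paper's, but the way you justify the key step does not hold up. The symmetry you invoke is stronger than what is actually true: a permutation of $V_t^i$ is \emph{not} an automorphism of the final graph, because vertices sharing label $i$ at node $t$ can have entirely different neighbourhoods inside $G_t$, built by the operations \emph{below} $t$. Swapping the identities of $v$ and $v^*$ therefore does not map a solution of the instance to another solution of the same instance; it produces a solution of an isomorphic relabelling of $G_t$ glued to the rest of the expression. What is available --- and all that is needed --- is the weaker fact that every future join adds the same set of new neighbours to all vertices currently sharing a label, so two same-label vertices with equal in-degree at $t$ have equal in-degree at the root $r$. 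This lets one compare in-degrees directly, with no identity swap and no need to preserve the table $A_t$ (which the lemma does not require; the algorithm explicitly shifts the large class from $j$ to $j+|V_t^q|$ after the join).

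Beyond that, two of your supporting claims are unjustified. The existence of a swap partner of in-degree exactly $j+1$ is not guaranteed: after a join with a label of size $s$, the class-$j$ vertices whose new edges all point inward jump to in-degree $j+s$, so the class $j+1$ may well be empty; your pigeonhole bounds the number of outward edges but does not produce a vertex in the specific class you need. And your appeal to ``orientations at joins involving a large class are forced inward by Lemma~\ref{biclique_nonorientable}'' is circular: that lemma only forces the opposite label of the join to have fewer than $2d$ vertices, so up to $2d^2$ edges \emph{can} legally point away from the large class; that one may nevertheless assume none do is precisely what this lemma is supposed to establish. The paper closes both gaps with one count: at most $2d^2$ of the at least $d^4$ vertices of class $j$ can have any outward new edge in a valid extension, so some witness $w$ in class $j$ has all its edges oriented towards it and satisfies $\delta_r^-(w)\le d$; since $v$ after the flip has the same in-degree as $w$ at $t$ and receives the same future increments (same label), $\delta_r^-(v)=\delta_r^-(w)\le d$, and no additional deletions are required. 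Your write-up needs this witness argument, or an equivalent correct substitute, to go through.
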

\begin{proof} Lemma \ref{biclique_nonorientable}
implies that no valid solution to \dOr\ can be obtained for a graph including
some biclique larger than $K_{2d,2d}$ (on any side) and in particular, that we
can assume that any Join node involves at most one label with a large
in-degree-class, while the other cannot have $2d$ vertices in total (all
other partial solutions can be discarded).
  
  Consider a Join node $t$ with $\eta_{p,q}$ for $p,q\in[1,\cw]$, where label
$p$ contains some large in-degree-class $j\in[0,d]$. As noted, we can assume
$|V_t^q|<2d$. Now, observe that the maximum number of newly created edges
between class $j$ and label $q$ that can be oriented towards the vertices of
$q$ is $<2d^2$, as each vertex of $q$ can have at most $d$ incoming edges for
the partial solution to remain valid. This means $>d^4-2d^2$ vertices of $j$
will have to be moved to another in-degree-class $j'>j$. The number of such shifts is bounded by the number $d$ of in-degree-classes for a partial solution to remain valid. Thus there will always be $>d^4-2d^3$ vertices in the
corresponding in-degree-class, which for $d\ge2$ is $>d^3$.
  
  Let $X_t=\{K_t,\Lambda_t\}$ be a valid partial solution that, during the computation of in-degree-signatures for some Join node $t$ with $\eta_{p,q}$, selects an orientation $\Lambda_t$ for the newly created edges that directs some edges $(v,u)$ away from a vertex $v$ of the large in-degree-class $j$ of label $p$ and towards a vertex $u$ of label $q$.
  Also let $Y_r=\{K_r,\Lambda_r\}$ be a valid global solution that extends $X_t$.
  Now consider global solution $Y'_r=\{K'_r,\Lambda'_r\}$ that differs from $Y_r$ by simply selecting an orientation $\Lambda'_r$ that differs from $\Lambda_r$ only in the direction of edges $(v,u)$:
  if $Y_r$ is valid, it is $|K_r|\le k$ and $\Lambda_r$ assigns no more than $d$ incoming edges to any vertex. As noted above, there must exist at least one vertex $w\in V_t^p$ belonging to large in-degree-class $j$, whose edges are all oriented towards it in $\Lambda_r$ and these must be $\le d$. Since $w$ and vertices $v$ (from large in-degree-class $j$ of label $p$, whose edges were oriented towards vertices $u$ of label $q$) belong to the same label $p$, any Join operations following node $t$ applied to $w$ are also applied to all $v$ and thus, the orientation $\Lambda'_t$ that directs $(u,v)$ towards $v$ can also be extended to an orientation $\Lambda'_r$ for global solution $Y'_r$ that does not require any more deleted vertices than $Y_r$ (keeping $|K'_r|\le|K_r|$) and remains valid as $\delta_r^-(w)=\delta_r^-(v)\le d$ for all such $v\in j$.
\end{proof}

\begin{theorem}\label{cw_algo_thm}
 Given a graph $G$ along with a $\cw$-expression $T$ of $G$, the \dOr\ problem can be solved in time $O^*(d^{O(d\cdot\cw)})$. 
\end{theorem}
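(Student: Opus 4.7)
The plan is to adapt the dynamic-programming algorithm of \cite{Bodlaender2017} along the $\cw$-expression $T$, but with a reduced state space obtained by exploiting Lemma \ref{cw_algo_lem}. As in that algorithm, for each node $t$ of $T$ I would maintain a collection of in-degree-signatures $A_t=(A_t^{i,j})$ for $i\in[1,\cw]$ and $j\in[0,d]$, each representing the number of non-deleted vertices of label $i$ whose current in-degree equals $j$ under some partial orientation $\Lambda_t$ of $G_t$; for every such signature I would store the minimum number of deletions needed to realize it, together with the information that is strictly necessary to update the signature at subsequent operations.

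The crucial modification is that the exact value of $A_t^{i,j}$ is no longer recorded once it exceeds the threshold $d^4$; instead each entry takes values in $\{0,1,\dots,d^4\}\cup\{\textsf{large}\}$. By Lemma \ref{cw_algo_lem}, two partial solutions whose capped signatures agree can be interchanged without increasing the deletion cost of any global extension, so it is safe to keep only one representative per capped signature. The total number of distinct signatures is therefore bounded by $(d^4+2)^{(d+1)\cw}=d^{O(d\cdot\cw)}$, which is the desired size bound.

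Next come the updates. For Leaf, Create, Relabel and Union nodes the rules are essentially bookkeeping on the table, identical to \cite{Bodlaender2017}, with the only change being that any entry exceeding $d^4$ is clamped to $\textsf{large}$. The delicate step is the Join node $\eta_{p,q}$. First I would discard any signature in which both labels $p,q$ retain more than $2d$ non-deleted vertices, since by Lemma \ref{biclique_nonorientable} the resulting biclique is not $d$-orientable. In any surviving signature, at most one label contains a $\textsf{large}$ in-degree-class, and if so, the other label has $O(d)$ non-deleted vertices. If no class is $\textsf{large}$, then $|V_t^p\cup V_t^q|$ is bounded in terms of $d$ (at most $(d+1)d^4$ per label), so the orientation of the newly created edges can be computed directly, for instance via the degree-constrained orientation subroutine of \cite{Asahiro2012}, and the signature updated by enumerating the resulting in-degree shifts. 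If one label, say $p$, contains a $\textsf{large}$ class $j$, then Lemma \ref{cw_algo_lem} tells us that every new edge from $V_t^q$ can be oriented into the vertices of this large class; I would update the signature by moving the affected $p$-vertices to higher in-degree classes, while the $q$-vertices only pick up incoming edges from the non-large classes of $p$, which are again of bounded size and can be handled by the same subroutine.

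The running time then follows almost verbatim from \cite{Bodlaender2017}: the $\cw$-expression has $O(n)$ nodes, each node carries at most $d^{O(d\cdot\cw)}$ signatures, and updating each signature costs polynomial time in the signature count plus one call to the orientation subroutine on a graph of size polynomial in $d$. The part I expect to be most delicate is the bookkeeping at a Join node when the label $p$ has several in-degree classes of mixed sizes (some \textsf{large}, some small): one must verify that always orienting new edges into the large class is compatible with the orientation choices left for the non-large classes, and that after the shift the \textsf{large} markers end up in the right places. Lemma \ref{cw_algo_lem} is precisely the exchange argument that justifies this simplification, but turning it into clean update formulas, and in particular tracking how many vertices migrate between in-degree classes when only the lower bound $d^4$ on a class size is known, is the main technical hurdle of the proof.
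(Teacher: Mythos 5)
Your proposal follows essentially the same route as the paper: cap each in-degree-class size at the threshold $d^4$ with a \textsf{large} marker, invoke Lemma \ref{biclique_nonorientable} to prune Join nodes that would create an oversized biclique, and use Lemma \ref{cw_algo_lem} to justify orienting all new Join edges into a large class, yielding $d^{O(d\cdot\cw)}$ states per node and the claimed running time. The update rules and the appeal to the degree-constrained orientation subroutine of \cite{Asahiro2012} for the small-class cases match the paper's argument, so this is the intended proof.
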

\begin{proof}
 As explained above, our algorithm is a modification of the DP given in \cite{Bodlaender2017}, the necessary modifications being the following.
 First, due to our focusing on the decision version of \dOr, we consider that including a vertex in the deletion set and appropriately decreasing the given budget $k$ is decided upon while computing the table of its Leaf node. 
 Next, if for some $i\in[1,\cw],j\in[0,d]$ we have $A_t^{i,j}\ge d^4$, we replace the actual numerical value with a distinct marker and refer to this in-degree-class as large.
 
 We compute the tables of Union and Relabel nodes in the same way, noting that any in-degree-class that is combined through such a node's application with a large in-degree-class will have to become large in turn. Further, observe that if both labels involved in a Join operation contain at least one large in-degree-class then the implied partial solution cannot be valid, due to Lemma \ref{biclique_nonorientable}. This also implies that for a valid partial solution, if one label contains at least one large in-degree-class, then the size of the other label must be $<2d$.
 
 Then, by Lemma \ref{cw_algo_lem}, to update the tables of a Join node $t$ with $\eta_{p,q}$ for $p,q\in[1,\cw]$, where label $p$ contains some large in-degree-class $j\in[0,d]$, we assume that all newly created edges attached to the vertices of $j$ are oriented towards them in any orientation $\Lambda_t$ for this node, thus being able to correctly update the sizes of all in-degree-classes and both labels $p,q$: as all edges will be oriented towards the vertices of $j$, then all these vertices will have an increase of in-degree equal to the size of $q$, meaning that the in-degree-class $j+|V_t^q|$ of $p$ will now have to become large, while all other computations of orientations and in-degree-signatures remain unchanged. As we now require $(d^4)^{d+2}$ states for each label (instead of $n^{d+2}$), our running time is $O^*(d^{O(d\cdot\cw)})$.
\end{proof}


\section{W[1]-hardness for Clique-Width}\label{sec:cw_whard}

In this section we present a reduction establishing that the algorithm of
Section \ref{sec:cwalg} is essentially optimal. More precisely, we show that,
under the ETH, no algorithm can solve \dOr\ in time $n^{o(\cw)}$. As a result,
the parameter dependence of $d^{O(\cw)}$ of the algorithm in Section
\ref{sec:cwalg} cannot be improved to a function that only depends on $\cw$. We
prove this result through a reduction from \textsc{$k$-Multicolored Independent
Set}. As before, we employ capacities and implicitly utilize \cdOr.

\subparagraph*{Construction.} Recall that an instance $[G=(V,E),k]$ of
\textsc{$k$-Multicolored Independent Set} consists of a graph $G$ whose vertex
set is given to us partitioned into $k$ sets $V_1,\ldots V_k$, with $|V_i|=n$
for all $i\in [1,k]$, and with each $V_i$ inducing a clique.  Given such an
instance, we will construct an instance $G'=(V',E')$ of \textsc{$d$-Orientable
Deletion}, where $d=n$. Let $V_i\coloneqq\{v_i^1,\dots,v_i^n\},\forall
i\in[1,k]$. To simplify notation, we use $E$ to denote the set of
\emph{non-clique} edges, i.e.\ those connecting vertices in parts $V_i,V_j$ for
$i\not=j$. Our construction is given as follows, while Figure
\ref{fig:whard_cw_construction} provides an illustration: 

\begin{enumerate} 

\item Create two sets $A_i,B_i\subset V',\forall i\in[1,k]$ of $n$ vertices
each, of capacities 0.  

\item Make a set of \emph{guard} vertices $W_i,\forall i\in[1,k]$, of size
$kn+3|E|+1$, of capacities $n$.  

\item Connect each vertex of $W_i$ to all vertices of $A_i,B_i$ for all
$i\in[1,k]$.  

\item For each edge $e=(v_i^l,v_j^h)\in E$ with endpoints $v_i^l\in
V_i,v_j^h\in V_j$ (i.e.\ the $l$-th vertex of $V_i$ and the $h$-th vertex of
$V_j$), make four new vertices $a_l^e,b_l^e,a_h^e,b_h^e$.  

\item Connect $a_l^e,b_l^e,a_h^e,b_h^e$ to each other via OR gadgets.  

\item Connect $a_l^e$ to all vertices of $A_i$ and $b_l^e$ to all vertices of
$B_i$, while $a_h^e$ is connected to all vertices of $A_j$ and $b_h^e$ to all
vertices of $B_j$.  

\item Set the capacities $\mathbf{c}(a_l^e)=n-l-1$, $\mathbf{c}(b_l^e)=l-1$,
$\mathbf{c}(a_h^e)=n-h-1$ and $\mathbf{c}(b_h^e)=h-1$. 

\end{enumerate} 

\begin{figure}[htbp]
\centerline{\includegraphics[width=140mm]{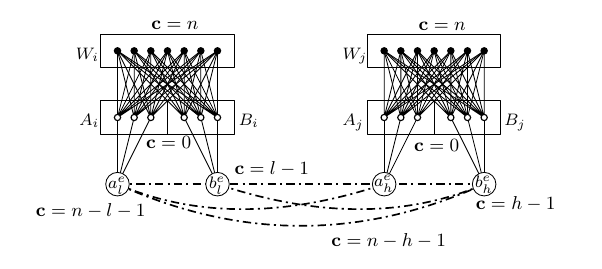}}
\caption{A partial view of the construction, depicting the gadgets encoding the
selection for $V_i,V_j$, as well the representation of an edge
$e=(v_i^l,v_j^h)$. Note dotted edges signifying OR gadgets.}
\label{fig:whard_cw_construction} \end{figure}

\begin{lemma}\label{lem_whard_cw_FWD}
 If $G$ has a $k$-multicolored independent set, then $G'$ has a deletion set $K'$, such that $G'\setminus K'$ is $d$-orientable, with $|K'|=kn+3|E|$.
\end{lemma}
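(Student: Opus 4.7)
The plan is a direct, constructive proof: I exhibit $K'$ and an orientation of $G'\setminus K'$ that respects the capacities. Write the independent set as $\{v_1^{l_1},\dots,v_k^{l_k}\}$ with $v_i^{l_i}\in V_i$. I will delete $n$ vertices per color class from $A_i\cup B_i$, and exactly three out of the four edge-gadget vertices per $e\in E$, for a total of $kn+3|E|$.

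First I specify the deletions. For each color class $i$, I include in $K'$ the first $l_i$ vertices of $A_i$ and the last $n-l_i$ vertices of $B_i$, leaving $n-l_i$ survivors in $A_i$ and $l_i$ survivors in $B_i$. For each edge $e=(v_i^l,v_j^h)\in E$, Lemma~\ref{or_gadget} together with the OR-clique structure on $\{a_l^e,b_l^e,a_h^e,b_h^e\}$ forces at least three of these four into any deletion set. The independent set property ensures $l\ne l_i$ or $h\ne l_j$, so I may always designate a survivor as follows: keep $a_l^e$ if $l<l_i$; keep $b_l^e$ if $l>l_i$; otherwise $l=l_i$, and I apply the analogous rule on the $j$-side (keep $a_h^e$ if $h<l_j$, else $b_h^e$). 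This yields precisely $|K'|=kn+3|E|$.

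Next I orient $G'\setminus K'$. Since all $A_i,B_i$ vertices have capacity $0$, every surviving edge incident to these sets must be oriented outward. Each guard in $W_i$ then absorbs exactly $(n-l_i)+l_i=n$ edges, matching $\mathbf{c}=n$. The capacities $\mathbf{c}(a_l^e)=n-l-1$ and $\mathbf{c}(b_l^e)=l-1$ are calibrated so that my deletion pattern fits: a surviving $a_l^e$ (so $l<l_i$) receives $n-l_i$ edges from the surviving part of $A_i$, and $n-l_i\le n-l-1$; a surviving $b_l^e$ (so $l>l_i$) receives $l_i$ edges from the surviving part of $B_i$, and $l_i\le l-1$. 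The $j$-side cases are symmetric. For each OR gadget, at least one of its two endpoints is deleted (for gadgets between edge-gadget vertices, three of the four are deleted, and pairs of surviving edge-gadget vertices never occur), so each of its $2d+2$ internal vertices retains at most one edge, which I orient inward; this respects their capacity~$1$.

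The only non-routine step is recognizing the precise matching between the combinatorial choice of $l_i$ and the numerical capacities on the edge gadgets: deleting exactly $l_i$ vertices from $A_i$ and $n-l_i$ from $B_i$ is what allows the inequalities $n-l_i\le n-l-1$ and $l_i\le l-1$ to become (respectively) tight at $l=l_i-1$ and $l=l_i+1$, so that \emph{any} $l\ne l_i$ admits a survivor. After this alignment is set up, every remaining verification reduces to an arithmetic check of capacity bounds.
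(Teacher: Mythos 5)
Your proof is correct and follows essentially the same route as the paper's: delete $l_i$ vertices from $A_i$ and $n-l_i$ from $B_i$, keep the one edge-gadget vertex whose capacity inequality ($n-l_i\le n-l-1$ when $l<l_i$, or $l_i\le l-1$ when $l>l_i$) is guaranteed by $l\ne l_i$ (or the symmetric condition on the $j$-side), and orient everything outward from the capacity-$0$ sets. The capacity checks for the guards, the surviving gadget vertex, and the OR-gadget internals all match the paper's argument.
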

\begin{proof}
 Let $K\subseteq V$ be a $k$-multicolored independent set in $G$ and $v_i^{l_i}$ denote the vertex selected from each $V_i$, or $K\coloneqq\{v_1^{l_1},\dots,v_i^{l_i},\dots,v_k^{l_k}\}$. Our deletion set $K'$ will include $l_i$ vertices from each $A_i$ and $n-l_i$ vertices from each $B_i$ (thus $n$ vertices in total will be deleted from each pair of $A_i,B_i$). Further, for each edge $e=(v_i^{l},v_j^{h})\in E$, our deletion set $K'$ will also include 3 out of the 4 vertices $a_{l}^e,b_{l}^e,a_{h}^e,b_{h}^e$ (it could not include less than 3, due to the OR gadgets anywhere between these four). We identify the vertices to include in $K'$ as follows: since $K$ is a $k$-multicolored independent set, we know there is no edge $e=(v_i^{l_i},v_j^{l_j})\in E$ and thus, at least one endpoint of the edge in question is different than the actual selection within $V_i$, i.e.\ either $l\not=l_i$, or $h\not=l_j$, or both (also for the other symmetrical cases). Without loss of generality we assume the selection within $V_i$ differs from the $v_i^l$ endpoint of $e$, or $l\not=l_i$. Thus, $K'$ includes $l_i$ vertices from $A_i$ and $n-l_i$ vertices from $B_i$. If $l>l_i$, then $l-1\ge l_i$ and the capacity of vertex $b_l^e$ will be sufficient for orientation of all (remaining) edges from $B_i$ towards it. We thus include in $K'$ all other three vertices $a_l^e,a_h^e,b_h^e$. On the other hand, if $l<l_i$, then $n-l-1\ge n-l_i$ and now the capacity of vertex $a_l^e$ will be sufficient for orientation of all (remaining) edges from $A_i$ towards it. We thus include in $K'$ all other three vertices $b_l^e,a_h^e,b_h^e$.
 
 In this way we have completed the deletion set $K'$ with $kn+3|E|$ vertices and what remains is to show that $G'\setminus K'$ is $d$-orientable. Having deleted $n$ vertices from each pair $A_i,B_i$, we can orient all edges from the remaining $n$ vertices in each $A_i,B_i$ towards each guard vertex in $W_i$, whose capacities are equal to $n$. Having also retained the correct vertex from each quadruple corresponding to some edge $e$ in $G$, we can safely orient all edges from either $A_i$ or $B_i$ (depending on whether the vertex that is not included in $K'$ is an $a_l^e$ or $b_l^e$) to the remaining vertex, whose capacity will be sufficient, as explained above. Finally, the edges remaining within the OR gadgets are oriented towards the vertices of the gadget, whose capacity of 1 is sufficient for the single remaining edge, as the other endpoint of the gadget is included in $K'$. 
\end{proof}

\begin{lemma}\label{lem_whard_cw_BWD}
 If $G'$ has a deletion set $K'$, such that $G'\setminus K'$ is $d$-orientable, with $|K'|=kn+3|E|$, then $G$ has a $k$-multicolored independent set.
\end{lemma}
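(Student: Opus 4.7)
My plan is to reverse-engineer the encoding used in Lemma~\ref{lem_whard_cw_FWD}: the tight budget forces any deletion set to have essentially the structure produced by the forward direction, and the multicolored independent set can then be read off.

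First, I would pin down the rigid structure of any $K'$ of size exactly $kn+3|E|$. Each guard set $W_i$ contains $kn+3|E|+1$ vertices, so at least one $w\in W_i$ must survive. Every surviving vertex of $A_i\cup B_i$ has capacity $0$ and so must orient all incident edges outward, meaning the capacity-$n$ guard $w$ receives one incoming edge from each such survivor. This caps the survivors of $A_i\cup B_i$ at $n$, forcing at least $n$ deletions per color class and $kn$ deletions in total. Because the four vertices of every edge-gadget are pairwise OR-connected, Lemma~\ref{or_gadget} applied to each pair forces at least $3$ deletions per gadget, contributing $3|E|$ more. These two lower bounds already add up to $kn+3|E|$, so every inequality is tight: no guard is deleted, exactly one vertex of each edge-gadget survives, and $\alpha_i+\beta_i=n$ holds for every $i$, where $\alpha_i,\beta_i$ denote the survivor counts in $A_i,B_i$.

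Next, I would recover a selection by setting $l_i:=\beta_i$ and inspecting the unique survivor of each edge-gadget. Since its three deleted companions already satisfy the six internal OR gadgets, the survivor receives in-edges only from the surviving vertices of the attached $A$- or $B$-set, and its capacity must dominate that in-degree. Matching each capacity against the forced in-degree gives: $a_l^e$ surviving forces $\alpha_i\le n-l-1$, i.e.\ $l<l_i$; $b_l^e$ surviving forces $\beta_i\le l-1$, i.e.\ $l>l_i$; $a_h^e$ forces $h<l_j$; and $b_h^e$ forces $h>l_j$. In every case $(l,h)\neq(l_i,l_j)$, so, provided each $l_i$ lies in $\{1,\dots,n\}$, the set $\{v_i^{l_i}:i\in[1,k]\}$ is a $k$-multicolored independent set of $G$.

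The main obstacle I expect is dealing with the boundary values $l_i\in\{0,n\}$, in which $B_i$ or $A_i$ is entirely deleted and the encoded index does not name a vertex of $V_i$. In these configurations the capacity constraints already forbid one side of every incident edge-gadget from being the survivor: for instance, $\beta_i=0$ forces $\alpha_i=n>n-l-1=\mathbf{c}(a_l^e)$ and rules out $a_l^e$, leaving only $b_l^e,a_h^e,b_h^e$ as candidates. I would dispose of these cases by a local swap that moves one deletion between $A_i$ and $B_i$ and simultaneously reassigns the at most one problematic edge-gadget survivor on the $V_j$ side, which is possible because the remaining slack in $\mathbf{c}(a_h^e)$ or $\mathbf{c}(b_h^e)$ keeps the orientation valid whenever $l_j\in\{1,\dots,n-1\}$; symmetric treatment handles $l_i=n$. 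After normalizing every $l_i$ into $\{1,\dots,n\}$ in this way, the selection $\{v_i^{l_i}\}_{i\in[1,k]}$ is the required $k$-multicolored independent set.
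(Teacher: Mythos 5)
Your main argument coincides with the paper's: the tight budget together with the guard sets and the pairwise OR gadgets forces exactly $n$ deletions from each $A_i\cup B_i$, exactly three from each edge quadruple and none elsewhere, and the capacity of the unique survivor of each quadruple then certifies $(l,h)\neq(l_i,l_j)$. Up to phrasing (you argue the contrapositive gadget by gadget, the paper assumes an edge between the two selected vertices and derives that all four of its gadget vertices would have to be deleted), this part is correct and is exactly what the paper does. In fact the paper stops there: it sets $l_i$ to be the number of deletions in $A_i$, a value in $[0,n]$, and selects ``the $l_i$-th vertex of $V_i$'' without ever discussing $l_i=0$, so your instinct that the boundary values need separate attention is sound and goes beyond the written proof.

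The gap is in your treatment of that boundary case: the ``local swap'' does not work. If $\beta_i=0$ (i.e.\ $l_i=0$), then for \emph{every} edge $e=(v_i^l,v_j^h)$ the vertex $b_l^e$ receives no edges from $B_i$ and survives for free, so the whole class $V_i$ imposes no constraint on the other selections. Moving one deletion from $B_i$ to $A_i$ to reach $l_i=1$ invalidates not ``at most one'' gadget but every gadget of an edge incident to $v_i^1$ (both $a_1^e$ and $b_1^e$ now exceed their capacities), and each of these must be repaired on the $V_j$ side, which requires $h\neq l_j$ --- precisely the non-adjacency you are trying to prove. If every candidate index in $V_i$ is adjacent to some selected $v_j^{l_j}$, no swap succeeds. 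Concretely, take $G$ with $V_1$ completely joined to $V_2$ and no other cross edges: $G$ has no multicolored independent set, yet deleting all of $B_1$, any $n$ vertices of $A_2\cup B_2$, and the three companions of each (freely surviving) $b_l^e$ yields a valid deletion set of size $kn+3|E|$. So the boundary case cannot be dismissed by a proof-level repair; it reveals that the construction itself needs an extra gadget forcing at least one deletion inside each $A_i$ (and, at the other end, the capacity $\mathbf{c}(a_n^e)=n-n-1=-1$ must be cleaned up for $l_i=n$ to behave as intended). You located a genuine soft spot of the reduction, but your patch is not a proof.
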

\begin{proof} First, observe that due to the guard vertices $W_i$ being of
capacity $n$ and connected to $2n$ vertices each (whose capacities are equal to
0), at least $n$ vertices must be deleted from each pair of $A_i,B_i$ (included
in $K'$). Deleting any guard vertex does not decrease the number of edges that
must be oriented towards the remaining vertices of $W_i$ and the size of $K'$
does not allow for deletion of all guard vertices. Next, observe that due to the OR gadgets connecting all
four vertices $a_l^e,b_l^e,a_h^e,b_h^e$ corresponding to some edge $e=(v_i^l,v_j^h)\in
E$, at least three of them must be included in $K'$ for $G'\setminus K'$ to be
$d$-orientable. Since $|K'|=kn+3|E|$, then exactly $n$ vertices are deleted
from each pair of $A_i,B_i$ for each $i\in[1,k]$ and exactly 3 vertices are
deleted from each quadruple of vertices corresponding to some edge of $G$.
 
 Let $l_i\in[0,n]$ be the number of vertices deleted from each $A_i$ (meaning $n-l_i$ are deleted from $B_i$). We let our set $K$ include the $l_i$-th vertex from each $V_i$, i.e.\ $K\coloneqq\{v_1^{l_1},\dots,v_i^{l_i},\dots,v_k^{l_k}\}$ and claim $K$ is a $k$-multicolored independent set in $G$. Suppose that this is not the case, meaning there exists some edge $e=(v_i^{l_i},v_j^{l_j})\in E$. Consider the quadruple of vertices $a_{l_i}^e,b_{l_i}^e,a_{l_j}^e,b_{l_j}^e$ corresponding to edge $e$: as there are $l_i$ vertices deleted from $A_i$, there are $n-l_i$ (remaining) edges between $A_i$ and $a_{l_i}^e$. As the capacities of all vertices in $A_i$ are 0, all these edges must be oriented towards $a_{l_i}^e$, whose capacity is $n-l_i-1<n-l_i$. Further, there are $n-l_i$ vertices deleted from $B_i$ and thus $l_i$ edges need to be oriented towards $b_{l_i}^e$, whose capacity is $l_i-1<l_i$. This means both $a_{l_i}^e$ and $b_{l_i}^e$ must be included in $K'$ for $G'\setminus K'$ to be $d$-orientable. The same holds for $a_{l_j}^e,b_{l_j}^e$, however, whose capacities will also be 1 less than the required number of edges to be oriented towards them. This implies that if the vertices from each $V_i$ corresponding to the numbers of deletions within each pair of $A_i,B_i$ are not independent, then either set $K'$ is of size larger than $kn+3|E|$, or that $G'\setminus K'$ is not $d$-orientable, which is a contradiction.
\end{proof}

\begin{lemma}\label{lem_whard_cw_bound}
 The clique-width of $G'$ is $\cw(G')\le2k+O(1)$.
\end{lemma}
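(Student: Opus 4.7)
The plan is to construct a $\cw$-expression for $G'$ explicitly, using $2k$ \emph{persistent} labels to carry the sets $A_i$ and $B_i$ throughout the construction, together with an $O(1)$ pool of \emph{auxiliary} labels that are reused across the many edge and OR gadgets. The standard ``inert label'' trick will yield the bound $\cw \le 2k + O(1)$ for the capacitated graph, and Lemma~\ref{lem:saturation} will then absorb the capacities into the equivalent \dOr\ instance at the cost of only an additive constant in clique-width.

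Concretely, I would reserve labels $\alpha_1,\beta_1,\ldots,\alpha_k,\beta_k$, one per $A_i$ and one per $B_i$, plus the following $O(1)$ auxiliary labels: four labels $p_1,p_2,p_3,p_4$ to host the gadget vertices $a_l^e,b_l^e,a_h^e,b_h^e$ of whichever edge $e$ is currently being processed; one label $\gamma$ under which the $2d+2$ internal vertices of a single OR gadget are introduced; one label $\omega$ used to introduce the guard vertices $W_i$; and a sink label $0$ into which vertices that require no further joins are renamed. First I would build each $A_i$ (resp.\ $B_i$) as a disjoint union of $n$ singletons under label $\alpha_i$ (resp.\ $\beta_i$); no internal joins are needed because these sets are independent. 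Next, for each $i$ in turn, I would introduce the guards of $W_i$ in one batch under label $\omega$, apply the joins $\eta_{\omega,\alpha_i}$ and $\eta_{\omega,\beta_i}$, and rename $\omega$ to $0$ so that it can be reused for $W_{i+1}$. Finally I would process the edges one at a time: for each $e=(v_i^l,v_j^h)$, introduce $a_l^e,b_l^e,a_h^e,b_h^e$ under the four labels $p_1,\dots,p_4$, perform the four joins attaching them to $\alpha_i,\beta_i,\alpha_j,\beta_j$ respectively, and then build each of the $\binom{4}{2}=6$ pairwise OR gadgets by introducing its $2d+2$ internal vertices under $\gamma$, performing the two joins $\eta_{\gamma,p_s}$ and $\eta_{\gamma,p_t}$ with the two endpoint labels, and immediately renaming $\gamma$ to $0$ before moving on to the next OR gadget. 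Once all six OR gadgets of $e$ have been built, I would rename each of $p_1,\dots,p_4$ to $0$ as well, freeing them for the next edge.

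The one point that requires care, and which I view as the main obstacle, is checking that reusing $\gamma$, $\omega$ and $p_1,\dots,p_4$ across gadgets never creates spurious edges to previously processed vertices. This is handled uniformly by renaming to the sink label $0$ immediately after every block of joins, which isolates the finalised structure from everything to come; since $0$ is never an argument of any later join, sink vertices remain inert for the rest of the construction. A straightforward tally then shows that the total number of labels used is at most $2k+7$, which bounds the clique-width of the capacitated version of $G'$ by $2k+O(1)$. Invoking Lemma~\ref{lem:saturation} then adds at most $4$ in passing to the equivalent uncapacitated \dOr\ instance, yielding $\cw(G')\le 2k+O(1)$ as claimed.
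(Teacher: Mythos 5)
Your construction is correct and follows essentially the same route as the paper's proof: two persistent labels per colour class for $A_i$ and $B_i$, a constant pool of reusable work labels for the guards, the edge quadruples and the OR-gadget internals, and a junk/sink label to retire finished vertices, with Lemma~\ref{lem:saturation} absorbing the capacity cliques at an additive constant cost. No gaps.
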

\begin{proof}
 We will describe the sequence of operations to construct graph $G'$ using at most $2k+O(1)$ labels, these being two labels for each $i\in[1,k]$ and a small number of ``work'' labels, along with a ``junk'' label to which we relabel the completed parts of the graph in order to reuse their work labels. We first introduce all vertices of sets $A_i$ and $B_i$ using one label per set and then all vertices of $W_i$ using one of the work labels. We then join $W_i$ to both $A_i,B_i$ and relabel $W_i$ to the junk label, thus being able to reuse its former work label for introduction of the following set $W_{i+1}$.
 
 Following this procedure for all $i\in[1,k]$, we turn to introduction of the quadruples of vertices for each edge $e=(v_i^l,v_j^h)\in E$. We introduce each of the 4 vertices $a_l^e,b_l^e,a_h^e,b_h^e$ using one label per vertex and join them to the appropriate sets $A_i,B_i,A_j,B_j$. We then introduce the $2d+2$ vertices of the OR gadget connecting $a_l^e,b_l^e$ using some work label and join them to both $a_l^e,b_l^e$. Relabelling the vertices of the OR gadget to the junk label allows us to reuse their previous work label for the remaining OR gadgets connecting the quadruple. Having thus completed the construction for edge $e$, we can relabel all vertices to the junk label and reuse their former work labels for the quadruple of the remaining edges.
 
 Concerning the cliques attached to each vertex to set their capacity, observe that we can introduce the vertex in question using some work label first, then introduce the vertices of the clique that should be adjacent to it using another work label, join them and then continue with the construction of the complete subgraph that forms the gadget using the same work label (see also Lemma \ref{lem:saturation}). Having thus constructed the clique for this vertex, we can relabel it to its ``proper'' label as described above in order to continue with our construction and also relabel all vertices of the clique to the junk label, thus being able to reuse the same work labels for the remaining cliques.
\end{proof}

\begin{theorem}\label{thm_whard_cw} \textsc{$d$-Orientable Deletion} is
W[1]-hard parameterized by the clique-width of the input graph. Furthermore, if
there exists an algorithm solving \dOr\ in time $n^{o(\cw)}$, then the ETH is
false. \end{theorem}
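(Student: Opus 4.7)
The plan is to simply assemble the three preceding lemmas and perform the standard conversion from \cdOr\ to \dOr. First, Lemmas \ref{lem_whard_cw_FWD} and \ref{lem_whard_cw_BWD} together give a parsimonious equivalence between the existence of a $k$-multicolored independent set in $G$ and the existence of a deletion set of size $kn+3|E|$ in the constructed \cdOr\ instance $G'$ (with $d=n$), while Lemma \ref{lem_whard_cw_bound} bounds the clique-width of $G'$ linearly in $k$. Because the theorem is stated for \dOr\ rather than \cdOr, I would then apply Lemma \ref{lem:saturation} with $d'=d=n$ to produce an equivalent \dOr\ instance $G''$ with the same optimum; crucially that lemma only increases the clique-width by an additive constant, so $\cw(G'')\le 2k+O(1)$.

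Given this, the W[1]-hardness claim is essentially immediate. The entire pipeline is a polynomial-time reduction, the parameter $\cw(G'')$ is linearly bounded by the parameter $k$ of the original instance, and $k$-Multicolored Independent Set on regular $k$-partite graphs is well known to be W[1]-hard parameterized by $k$. This is therefore a valid parameterized reduction to \dOr\ parameterized by clique-width.

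For the ETH-based lower bound I would invoke the standard fact that Multicolored Independent Set on an $n$-vertex graph admits no $f(k)\cdot n^{o(k)}$ algorithm unless the ETH fails. It remains to check that $N:=|V(G'')|$ is polynomial in $n$: the sets $A_i,B_i,W_i$ contribute $O(kn+k|E|)$ vertices, each of the $|E|\le k^2n^2$ edges gives a quadruple plus $O(1)$ OR gadgets of size $O(d)=O(n)$, and Lemma \ref{lem:saturation} attaches a saturation clique of size $O(d)=O(n)$ to each capacitated vertex, so altogether $N=\mathrm{poly}(n)$. A hypothetical $N^{o(\cw(G''))}$-time algorithm would therefore translate into an algorithm for Multicolored Independent Set running in time $\mathrm{poly}(n)^{o(k)}=n^{o(k)}$, contradicting the ETH.

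The only potential obstacle is bookkeeping: ensuring that both the clique-width bound and the total vertex count survive the saturation step. Both are already handled by Lemma \ref{lem:saturation} (an additive $+4$ for clique-width, and $O(d)$ new vertices per capacitated vertex), so there is no genuinely new ingredient to supply beyond threading the three lemmas together.
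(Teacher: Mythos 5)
Your proposal is correct and follows essentially the same route as the paper: combine Lemmas \ref{lem_whard_cw_FWD}, \ref{lem_whard_cw_BWD} and \ref{lem_whard_cw_bound}, handle the capacities via Lemma \ref{lem:saturation} (which the paper folds directly into the construction and its clique-width analysis), and invoke the W[1]-hardness and $n^{o(k)}$ ETH lower bound for \textsc{$k$-Multicolored Independent Set}. Your explicit check that the instance size stays polynomial in $n$ is a detail the paper leaves implicit, but it is not a different approach.
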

\begin{proof} Given some instance of \textsc{$k$-Multicolored Independent Set},
we construct an instance of \textsc{$d$-Orientable Deletion} with $d=n$, as
described above. Lemmas \ref{lem_whard_cw_FWD}, \ref{lem_whard_cw_BWD} show
correctness of the reduction, while Lemma \ref{lem_whard_cw_bound} gives the
bound on the clique-width of the constructed graph. For the running time bound
we recall that an algorithm for \textsc{$k$-Multicolored Independent Set}
running in time $n^{o(k)}$ would contradict the ETH. \end{proof}

\section*{Acknowledgement}
The very useful remarks and suggestions of the two anonymous referees are gratefully acknowledged.

\bibliography{orientable}

\begin{thebibliography}{10}

\bibitem{Asahiro2012}
Yuichi Asahiro, Jesper Jansson, Eiji Miyano, and Hirotaka Ono.
\newblock Upper and lower degree bounded graph orientation with minimum
  penalty.
\newblock In {\em Proceedings of the 18th Computing: The Australasian Theory
  Symposium, {CATS} 2012}, volume 128 of {\em {CRPIT}}, pages 139--146, 2012.

\bibitem{AsahiroJMO16}
Yuichi Asahiro, Jesper Jansson, Eiji Miyano, and Hirotaka Ono.
\newblock Degree-constrained graph orientation: Maximum satisfaction and
  minimum violation.
\newblock {\em Theory Comput. Syst.}, 58(1):60--93, 2016.

\bibitem{AsahiroJMOZ11}
Yuichi Asahiro, Jesper Jansson, Eiji Miyano, Hirotaka Ono, and Kouhei Zenmyo.
\newblock Approximation algorithms for the graph orientation minimizing the
  maximum weighted outdegree.
\newblock {\em J. Comb. Optim.}, 22(1):78--96, 2011.

\bibitem{AsahiroMO11}
Yuichi Asahiro, Eiji Miyano, and Hirotaka Ono.
\newblock Graph classes and the complexity of the graph orientation minimizing
  the maximum weighted outdegree.
\newblock {\em Discrete Applied Mathematics}, 159(7):498--508, 2011.

\bibitem{AsahiroMOZ07}
Yuichi Asahiro, Eiji Miyano, Hirotaka Ono, and Kouhei Zenmyo.
\newblock Graph orientation algorithms to minimize the maximum outdegree.
\newblock {\em Int. J. Found. Comput. Sci.}, 18(2):197--215, 2007.

\bibitem{BateniCG09}
MohammadHossein Bateni, Moses Charikar, and Venkatesan Guruswami.
\newblock Maxmin allocation via degree lower-bounded arborescences.
\newblock In {\em Proceedings of the 41st Annual {ACM} Symposium on Theory of
  Computing, {STOC} 2009}, pages 543--552. {ACM}, 2009.

\bibitem{BetzlerBNU12}
Nadja Betzler, Robert Bredereck, Rolf Niedermeier, and Johannes Uhlmann.
\newblock On bounded-degree vertex deletion parameterized by treewidth.
\newblock {\em Discrete Applied Mathematics}, 160(1-2):53--60, 2012.

\bibitem{Bodlaender00}
Hans~L. Bodlaender.
\newblock {The algorithmic theory of treewidth}.
\newblock {\em Electronic Notes in Discrete Mathematics}, 5:27--30, 2000.

\bibitem{Bodlaender06}
Hans~L. Bodlaender.
\newblock Treewidth: Characterizations, applications, and computations.
\newblock In {\em Graph-Theoretic Concepts in Computer Science, 32nd
  International Workshop, {WG} 2006}, volume 4271 of {\em Lecture Notes in
  Computer Science}, pages 1--14, 2006.

\bibitem{BodlaenderK08}
Hans~L. Bodlaender and Arie M. C.~A. Koster.
\newblock {Combinatorial Optimization on Graphs of Bounded Treewidth}.
\newblock {\em Comput. J.}, 51(3):255--269, 2008.

\bibitem{Bodlaender2017}
Hans~L. Bodlaender, Hirotaka Ono, and Yota Otachi.
\newblock Degree-constrained orientation of maximum satisfaction: Graph classes
  and parameterized complexity.
\newblock {\em Algorithmica}, 80(7):2160--2180, 2018.

\bibitem{BodlaenderOO16}
Hans~L. Bodlaender, Hirotaka Ono, and Yota Otachi.
\newblock A faster parameterized algorithm for \textsc{Pseudoforest Deletion}.
\newblock {\em Discrete Applied Mathematics}, 236:42--56, 2018.

\bibitem{ChakrabartyCK09}
Deeparnab Chakrabarty, Julia Chuzhoy, and Sanjeev Khanna.
\newblock On allocating goods to maximize fairness.
\newblock In {\em 50th Annual {IEEE} Symposium on Foundations of Computer
  Science, {FOCS} 2009}, pages 107--116. {IEEE} Computer Society, 2009.

\bibitem{ChrobakE91}
Marek Chrobak and David Eppstein.
\newblock Planar orientations with low out-degree and compaction of adjacency
  matrices.
\newblock {\em Theor. Comput. Sci.}, 86(2):243--266, 1991.

\bibitem{CE12}
Bruno Courcelle and Joost Engelfriet.
\newblock {\em Graph Structure and Monadic Second-Order Logic - {A}
  Language-Theoretic Approach}, volume 138 of {\em Encyclopedia of mathematics
  and its applications}.
\newblock Cambridge University Press, 2012.

\bibitem{CourcelleMR00}
Bruno Courcelle, Johann~A. Makowsky, and Udi Rotics.
\newblock Linear time solvable optimization problems on graphs of bounded
  clique-width.
\newblock {\em Theory Comput. Syst.}, 33(2):125--150, 2000.

\bibitem{CyganFKLMPPS15}
Marek Cygan, Fedor~V. Fomin, Lukasz Kowalik, Daniel Lokshtanov, D{\'{a}}niel
  Marx, Marcin Pilipczuk, Michal Pilipczuk, and Saket Saurabh.
\newblock {\em Parameterized Algorithms}.
\newblock Springer, 2015.

\bibitem{EbenlendrKS14}
Tom{\'{a}}s Ebenlendr, Marek Krc{\'{a}}l, and Jir{\'{\i}} Sgall.
\newblock Graph balancing: {A} special case of scheduling unrelated parallel
  machines.
\newblock {\em Algorithmica}, 68(1):62--80, 2014.

\bibitem{Impagliazzo2001}
Russell Impagliazzo, Ramamohan Paturi, and Francis Zane.
\newblock Which problems have strongly exponential complexity?
\newblock {\em J. Comput. Syst. Sci.}, 63(4):512--530, 2001.

\bibitem{LokshtanovMS18}
Daniel Lokshtanov, D{\'{a}}niel Marx, and Saket Saurabh.
\newblock Known algorithms on graphs of bounded treewidth are probably optimal.
\newblock {\em {ACM} Trans. Algorithms}, 14(2):13:1--13:30, 2018.

\bibitem{Mathieson10}
Luke Mathieson.
\newblock The parameterized complexity of editing graphs for bounded
  degeneracy.
\newblock {\em Theor. Comput. Sci.}, 411(34-36):3181--3187, 2010.

\bibitem{Moshkovitz15}
Dana Moshkovitz.
\newblock The projection games conjecture and the {NP}-hardness of $\ln
  n$-approximating set-cover.
\newblock {\em Theory of Computing}, 11:221--235, 2015.

\bibitem{PhilipRS18}
Geevarghese Philip, Ashutosh Rai, and Saket Saurabh.
\newblock Generalized pseudoforest deletion: Algorithms and uniform kernel.
\newblock {\em {SIAM} J. Discrete Math.}, 32(2):882--901, 2018.

\bibitem{Szeider11}
Stefan Szeider.
\newblock Not so easy problems for tree decomposable graphs.
\newblock {\em CoRR}, abs/1107.1177, 2011.

\bibitem{TAKAHASHI1995253}
Atsushi Takahashi, Shuichi Ueno, and Yoji Kajitani.
\newblock {Mixed searching and proper-path-width}.
\newblock {\em Theoretical Computer Science}, 137(2):253–268, 1995.

\bibitem{VerschaeW14}
Jos{\'{e}} Verschae and Andreas Wiese.
\newblock On the configuration-{LP} for scheduling on unrelated machines.
\newblock {\em J. Scheduling}, 17(4):371--383, 2014.

\end{thebibliography}

\end{document}